\newtheorem{theorem}{Theorem}
\newtheorem{prop}{Proposition}
\newtheorem{example}{Example}
\newcommand{\be}{\begin{equation}}
\newcommand{\ee}{\end{equation}}
\newcommand{\ba}{\begin{array}}
\newcommand{\ea}{\end{array}}
\newcommand{\bea}{\begin{eqnarray}}
\newcommand{\eea}{\end{eqnarray}}
\newcommand{\conv}{{\mbox{conv}}}
\newcommand{\tran}{^{\mbox{\scriptsize T}}}  
\newcommand{\asign}{{\mbox{$\colon\hspace{-2mm}=\hspace{1mm}$}}}
\newcommand{\vbar}{\raisebox{.17ex}{\rule{.04em}{1.35ex}}}
\newcommand{\vbarind}{\raisebox{.01ex}{\rule{.04em}{1.1ex}}}
\newcommand{\R}{\ifmmode {\rm I}\hspace{-.2em}{\rm R} \else ${\rm I}\hspace{-.2em}{\rm R}$ \fi}
\newcommand{\D}{\ifmmode {\rm I}\hspace{-.2em}{\rm D} \else ${\rm I}\hspace{-.2em}{\rm D}$ \fi}
\newcommand{\T}{\ifmmode {\rm I}\hspace{-.2em}{\rm T} \else ${\rm I}\hspace{-.2em}{\rm T}$ \fi}
\newcommand{\N}{\ifmmode {\rm I}\hspace{-.2em}{\rm N} \else \mbox{${\rm I}\hspace{-.2em}{\rm N}$} \fi}
\newcommand{\B}{\ifmmode {\rm I}\hspace{-.2em}{\rm B} \else \mbox{${\rm I}\hspace{-.2em}{\rm B}$} \fi}
\newcommand{\Hil}{\ifmmode {\rm I}\hspace{-.2em}{\rm H} \else \mbox{${\rm I}\hspace{-.2em}{\rm H}$} \fi}
\newcommand{\C}{\ifmmode \hspace{.2em}\vbar\hspace{-.31em}{\rm C} \else \mbox{$\hspace{.2em}\vbar\hspace{-.31em}{\rm C}$} \fi}
\newcommand{\Cind}{\ifmmode \hspace{.2em}\vbarind\hspace{-.25em}{\rm C} \else \mbox{$\hspace{.2em}\vbarind\hspace{-.25em}{\rm C}$} \fi}
\newcommand{\Q}{\ifmmode \hspace{.2em}\vbar\hspace{-.31em}{\rm Q} \else \mbox{$\hspace{.2em}\vbar\hspace{-.31em}{\rm Q}$} \fi}
\newcommand{\Z}{\ifmmode {\rm Z}\hspace{-.38em}{\rm Z} \else ${\rm Z}\hspace{-.28em}{\rm Z}$ \fi}
\renewcommand{\vec}[1]{\bf{#1}}     
\newenvironment{Ex}
{\begin{adjustwidth}{0.04\linewidth}{0cm}
\begingroup\small
\vspace{-0.10em}
\raisebox{-.20em}{\rule{\linewidth}{0.3pt}}
\begin{example}
}
{
\end{example}
\vspace{-2mm}
\rule{\linewidth}{0.3pt}
\endgroup
\end{adjustwidth}}
\newcommand{\hh}{\hspace{-1mm}} 
\newcommand{\goodgapp}{\hspace{0.1\linewidth}} 
\newcommand{\goodgap}{\hspace{0.01\linewidth}} 
\begin{document}
\title{Optimizing Client Association in \\60 GHz Wireless Access Networks}
\author{
        George Athanasiou,~\IEEEmembership{Member,~IEEE,}
        Pradeep Chathuranga Weeraddana,~\IEEEmembership{Member,~IEEE},
        \\Carlo Fischione,~\IEEEmembership{Member,~IEEE,}
        and Leandros Tassiulas,~\IEEEmembership{Fellow,~IEEE,}

\thanks{G. Athanasiou, P. C. Weeraddana and C. Fischione are with the Automatic Control Lab, Electrical Engineering School and Access Linnaeus Center, KTH Royal Institute of Technology, Stockholm, Sweden. E-mail: \ttfamily{\{georgioa, chatw, carlofi\}@kth.se}}
\thanks{L. Tassiulas is with Computer and Communication Engineering Department, University of Thessaly, Volos, Greece. E-mail: \ttfamily{leandros@uth.gr}}
\thanks{This work was supported by the Swedish Research Council and the EU project Hydrobionets.}}

\maketitle

\begin{abstract}
MillimeterWave communications in the 60 GHz band are considered one of the key technologies for enabling multi-gigabit wireless access. However, the high propagation loss in such a band poses major obstacles to the optimal utilization of the wireless resources, where the problem of efficient client association to access points (APs) is of vital importance. In this paper, the client association in 60 GHz wireless access networks is investigated. The AP utilization and the quality of the rapidly vanishing communication links are the control parameters. Because of the tricky non-convex and combinatorial nature of the client association optimization problem, a novel solution method is developed to guarantee balanced and fair resource allocation. A new distributed, lightweight and easy to implement association algorithm, based on Lagrangian duality theory and subgradient methods, was proposed. It is shown that the algorithm is asymptotically optimal, that is, the relative duality gap diminishes to zero as the number of clients increases. Both theoretical and numerical results evince numerous useful properties of the algorithm, such as fast convergence, scalability, time efficiency, and fair execution in comparison to existing approaches. It is concluded that the proposed solution can be applied in the forthcoming 60 GHz wireless access networks.
\end{abstract}

\begin{keywords}
60 GHz Wireless Access Networks, Resource Allocation, Association Control
\end{keywords}


\section{Introduction}\label{sec:Introduction}

MillimeterWave~(mmW) communications have recently attracted the interest of academia, industry, and standardization bodies, although the technology was invented and used many decades ago, especially in the context of military applications~\cite{Giannetti99,Smulders07}. mmW communications utilizes the part of the electromagnetic spectrum between 30 and 300~GHz, which corresponds to wavelengths from 10~mm to 1~mm~\cite{Oliver89}. The continuous development of Complementary Metal Oxide Semiconductor technology~\cite{Doan04}, with low cost and low power consumption, has enabled the use of the mmW spectrum for wireless communications, including the provisioning of quality of service (QoS) sensitive applications. 

Due to the 60~GHz great commercial potential, multiple industry-led efforts and international organizations have emerged for the standardization~\cite{IBM60}.
Examples include IEEE~802.15.3 Task Group~3c~(TG3c)~\cite{802_15_3c}, IEEE~802.11ad standardization task group~\cite{802_11ad}, the WirelessHD~\cite{WirelessHD} consortium, the Wireless Gigabit Alliance~(WiGig)~\cite{WirelessG}, and many others.
More than 5~GHz of continuous bandwidth is available in many countries worldwide, which makes 60~GHz systems particularly attractive for gigabit wireless applications such as gigabyte file transfer, wireless docking stations, wireless gigabit ethernet, wireless gaming, uncompressed high definition video transmission. Scenarios such as small-cells~\cite{Hoydis11} and mobile data offloading~\cite{Lee12}, which are nowadays strongly motivated by the increased end-user connectivity requirements and mobile traffic, can be accommodated with the use of 60~GHz radio access technology.

The 60~GHz huge bandwidth offers many benefits in terms of capacity and flexibility.
For example, even with a low spectral efficiency such as 0.4 bps/Hz, 60~GHz communication systems can provide a very high data rate of 1~Gbps, making it an ideal candidate to support simple modulation techniques~\cite{Daniels10}. In contrast, ultra-wideband (UWB) systems require roughly 2 bps/Hz to achieve 1~Gbps, but its actual deployment is limited to 400 Mbps at 1~m operating range~\cite{Geng09}. The estimated spectral efficiency for IEEE~802.11n~\cite{802_11n} communication systems is 25~bps/Hz for achieving 1~Gbps, which makes their application to high bandwidth services unacceptable in terms of cost and simple implementation~\cite{Daniels10}.
Moreover, the 60~GHz communication systems are less restricted in terms of power limits, as opposed to UWB systems~\cite{Yang04}. The regulation related to 60~GHz band allows much higher transmission power, compared to other existing wireless local area and personal systems~\cite{Doan04}.

However, there are several challenging aspects and potentialities that must still be addressed in mmW communications. On the negative side, there are severe channel attenuations. Higher power levels are necessary to overcome the high path loss. The equivalent isotropic radiated power in the 60~GHz system is approximately~10 times larger than the IEEE~802.11n~\cite{802_11n} and 30000 times larger than the UWB systems. The typical operation range of 60~GHz communications is 10 to 20~meters~\cite{802_15_3c}. For example, a path loss of 2~meters results in attenuation of approximately 74~dB. Thus directionality and blockage problems may be frequent. On the positive side, the interference levels for 60~GHz are much lower compared to the congested 2.4~GHz and 5~GHz~bands.
The compact design of 60~GHz radio allows the use of multiple-antenna solutions at the user device. Compared to 5~GHz systems, the form factor is approximately 140 times smaller and thus can be conveniently integrated into consumer electronic products. The exploitation of these unique characteristics is essential for efficient resource allocation algorithms.

In this paper we address the fundamental resource allocation problem of the client association in 60~GHz wireless access networks. Such a problem is more challenging in 60 GHz band than traditional wireless networks since the wireless channel is unstable in high frequencies and several events can violate the efficient operation of the network, such as moving obstacles that can block the communication \cite{Singh07}. Specifically, we consider the natural situation where each client has to be associated to one of the wireless APs. 
%
This gives rise to a challenging mixed integer linear optimization problem, which is combinatorial and non-convex in general and thus hard to solve efficiently.
Existing methods, such solution approaches for the \emph{generalized assignment problem} in combinatorial optimization~\cite[\S~8]{Bertsekas-98}, cannot be used because our main goal is to \emph{minimize the maximum AP utilization} in the network and ensure a \emph{fair} load distribution, which cannot be modeled by such a \emph{generalized assignment problem}. 
Nevertheless, based on Lagrangian duality theory~\cite[\S~5]{Boyd-Vandenberghe-04} and on subgradient methods, we develop a new solution approach and derive a distributed and iterative algorithm for client association (\emph{DAA}). We show the asymptotic optimality of the proposed solution method by an analytical bound on the duality gap. In addition, sufficient conditions under which strong duality holds are also characterized. The sensitivity of the convergence speed of \emph{DAA} to the variation of the numbers of clients and APs is analytically investigated. Numerical simulations illustrate and compare \emph{DAA} to benchmark algorithms, including the existing standard association approaches.

Unlike the client association approaches in more traditional access networks~\cite{Bejerano1}, we take into account the load, the channel quality and the special communication characteristics of 60~GHz radio channel and we design a dynamic association mechanism that ensures balanced and fair load distribution among the APs. Global methods~\cite{Horst-Pardalos-Toai-00} may be employed to find the solution of the combinatorial client association problem. However, global approaches have the drawbacks of 1) the prohibitive computational complexity, even in the case of problems with few variables and 2) they are inherently centralized, which makes their distributed implementation very difficult if not impossible. 
In contrast, our proposed method is \emph{fast} and can be implemented in a \emph{distributed} manner, which is crucial for many applications. Last but not least, we discuss in detail the applicability of the association algorithm on top of the current standardized medium access control (MAC) mechanisms. 
The model and the lightweight algorithm proposed in this work are general and can be applied with different existing MAC mechanisms for 60~GHz access networks, such as IEEE~802.15.3c and IEEE~802.1ad. 

The rest of the paper is organized as follows. In \S~\ref{sec:Related_work} we present related approaches in literature. A description of the system model and the problem formulation is presented in \S~\ref{sec:SysModel_mini_max_primal_problem}. In \S~\ref{sec:dual_prob}, we give the general solution approach to the client association problem by using duality theory and subgradient method. In \S~\ref{sec:algorithm_prop}, we describe the important properties of the proposed algorithm. In \S~\ref{sec:numerical_results} numerical results are presented. Lastly, \S~\ref{sec:conclusions} concludes the paper.

\section{Related work}\label{sec:Related_work}
During the last decade, resource allocation and in particular association/handoff control for wireless local area networks (WLANs) have been the focus of intense research. Several studies have analyzed the performance of the basic association policy that IEEE 802.11 standard defines based on the received signal strength indicator (RSSI). These studies have showed that this basic association policy can lead to inefficient use of the network resources \cite{Bejerano1}, \cite{Arbaugh}, \cite{Bejerano2}. Therefore, there has been increasing interest in designing better client association policies. In what follows we summarize some representative approaches in literature.

The authors in \cite{Bejerano2} study a client association policy that ensures network-wide max-min fair bandwidth allocation to the users. The work in \cite{Kauffmann} presents self-configuring algorithms that provide improved client association and fair resource sharing. In \cite{Kumar05} the problem of optimal user association to the available APs is formulated as a utility maximization problem. In \cite{Shin}, neighbor and non-overlap graphs are utilized to reduce the probing latency. Moreover, the ``multi-homing" scenario is introduced in \cite{Shakkottai}, where the traffic is split among the available APs. In this approach, the throughput is maximized by constructing a fluid model of user population that is multi-homed by the available APs in the network. The system presented in \cite{Bejerano1} ensures fairness and quality of service (QoS) provisioning in WLANs with multiple APs. The authors compare their system to 802.11 and prove that 802.11 cannot support both fairness and QoS guarantees.

Several approaches in literature jointly consider the association and handoff problems in WLANs. The authors of \cite{Mhatre} propose an association/handoff mechanism that is based on performance monitoring of the wireless links. In \cite{Arbaugh}, there is a detailed analysis of the 802.11 association/handoff process. The presented approach optimizes the probe phase of the association process to reduce the probe latency. In \cite{Amir06}, the authors propose protocols for an architecture called \emph{SMesh}, which supports fast handoff introducing collaboration between the APs in a mesh network. The work in \cite{Park07} presents a pre-authentication mechanism that speeds up the handoff process.

Another line of research considers user service requests by readjusting the load across all APs\cite{Balachandran02}. Therein, a network-directed roaming approach suggests to the clients the most appropriate network location to satisfy their requests. In \cite{Lee07} and \cite{Lee05}, a dual-association approach in wireless mesh networks is presented, where the APs for unicast traffic and the APs for broadcast traffic are independently chosen by exploiting overlapping coverage and optimizing the overall network load. Lastly, in \cite{Athanasiou07}, \cite{Athanasiou09} dynamic association and reassociation procedures are introduced with the use of the notion of the \emph{airtime cost}. This metric reflects the uplink/downlink channel conditions and the traffic load in the network. The cross-layer extension of this mechanism considers the routing-based information from the mesh backbone. Accordingly, the clients optimize their association/handoff decision.

The previous approaches are hard to apply in 60 GHz wireless access networks, due to the special characteristics of the 60 GHz channel, and the obvious differences with the rest wireless access technologies that we have previously mentioned. It follows that novel mechanisms must be designed to provide optimal resource allocation. These mechanisms must take into account the characteristics of 60 GHz wireless channel such as increased path loss, short range, fragile links, etc. Unfortunately, there is not much research in this field.

Some recent interesting approaches on 60 GHz wireless personal and local area networks have appeared in the literature. The authors in \cite{Pyo09}, study the throughput of mmWave personal networks. A combination of carrier sense multiple access (CSMA/CA) and time division multiple access (TDMA) is used. A technique for optimizing the throughput of the network is proposed, where collisions are reduced in the CSMA/CA time-slots by a private channel release time. In \cite{Singh10}, a MAC protocol that employs memory and learning to address deafness, while exploiting the reduction of interference between simultaneous transmissions, is proposed with reactive interference management. The directionality and blockage problems of mmW networks are studied in \cite{Singh07}. A cross-layer approach is presented, where a single hop transmission is preferred when line of sight (LOS) is available and a relay node is randomly selected as an alternative. In \cite{Cai10}, a resource management mechanism is proposed based on the exclusive region (ER) to exploit the spatial reuse of mmW networks. The authors in \cite{Singh11} describe an interference analysis framework that enables a quantitative evaluation of collision loss probability for a mmW mesh network with uncoordinated transmissions, as a function of the antenna patterns and spatial density of simultaneously transmitting nodes. Concurrent transmissions in 60 GHz wireless networks are studied in \cite{Qiao11} by exploiting the spatial reuse and time division multiplexing gain. It is shown that the network throughput is improved compared to single hop transmission schemes.

The current 60 GHz standardization bodies, such as IEEE 802.11ad and IEEE 802.15.3c, adopt the RSSI-based mechanism as the basic association/handoff functionality. However, RSSI is not an efficient decision metric for user association for several reasons. High RSSI values cannot univocally indicate high throughput. This is because RSSI not only depends on the distance from the APs, but also on the transmission powers of the APs. The accuracy of the RSSI-based technique is significantly affected by the high path loss, dispersion and directionality of the 60 GHz wireless channel. Moreover, since the wireless channel is a shared medium, throughput depends on the population of the cell served by the AP. An AP may become overloaded if a large number of clients are associated with it. Therefore, new metrics that better reflect the channel characteristics and the utilization of the AP are required.

In contrast to the existing work in literature, this paper considers the special characteristics of the 60 GHz channel in an optimization problem where the objective is to \emph{minimize the maximum AP utilization} in the network. Moreover, we design a lightweight distributed algorithm that balances the AP utilization by optimizing the client association process. We believe that this paper is the first to study such a fundamental resource allocation problem in 60 GHz wireless access networks, especially when complex scenarios with many users and high traffic demands are considered~\cite{Hoydis11, Lee12}. We propose a simple yet efficient solution and compare it to basic association policies, already in use in the present 60 GHz communication technologies under standardization (802.15.3c, 802.11ad). The work that is presented and evaluated in the forthcoming sections is complementary to the aforementioned resource management and scheduling approaches (the clients must first be assigned to the available APs and then the scheduling of the transmission can be handled under the established association state).

\begin{figure}[t]
\centering
\includegraphics[height=0.13\textheight]{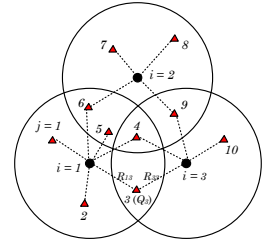}\vspace{-2mm}
\caption{Wireless access network: $\mathcal{N}=\{1,2,3\}$, $\mathcal{M}=\{1,\ldots,10\}$, $\mathcal{M}_1=\{1,\ldots,6\}$, $\mathcal{M}_2=\{4,\ldots,9\}$, $\mathcal{M}_3=\{3,4,9,10\}$, $\mathcal{N}_j=\{1\}$ for $j\in\{1,2\}$, $\mathcal{N}_j=\{2\}$ for $j\in\{7,8\}$, $\mathcal{N}_j=\{3\}$ for $j=10$, $\mathcal{N}_j=\{1,2\}$ for $j\in\{5,6\}$, $\mathcal{N}_j=\{1,3\}$ for $j=3$, $\mathcal{N}_j=\{2,3\}$ for $j=9$, $\mathcal{N}_j=\{1,2,3\}$ for $j=4$. The area inside the solid-lined circles around the APs $1,2$, and $3$ represent the transmission regions of each AP. The demanded data rate for client~$3$ is $Q_3$, the offered transmission rate from AP~$1$ to client~$3$ is $R_{13}$, and the offered transmission rate from AP~$3$ to client~$3$ is $R_{33}$.}
\label{fig:system_example}
\vspace{-4.0mm}
\end{figure}

\section{System model and problem formulation}\label{sec:SysModel_mini_max_primal_problem}

A 60 GHz wireless access network consisting of $N$ APs and $M$ clients is considered. We denote the set of APs by $\mathcal{N}=\{1,\ldots,N\}$ and the set of clients by $\mathcal{M}=\{1,\ldots,M\}$.
The set of clients that can be associated to AP~$i$ is denoted by $\mathcal{M}_i$. We assume that there are no isolated clients, i.e., $\mathcal{M}_1\cup\cdots\cup\mathcal{M}_N=\mathcal{M}$. We denote by $\mathcal{N}_j$ the set of candidate APs that client $j$ could be associated with. Figure~\ref{fig:system_example} shows an example access network, where the clients positioned inside a disc with radius $r$ (centered at the location of AP~$i$) can be associated with AP~$i$. However, the disk-shaped region is only used for illustrative purposes.

Each node (AP or client) is equipped with steerable directional antennas and it can direct its beams to transmit or to receive. We assume that AP~$i$ can support all the clients in $\mathcal{M}_i$ with a separate transmit beam. We consider the case where receivers are using single-user detection (i.e., a receiver decodes each of its intended signals by treating all other interfering signals as noise) and assume that the achievable rate from AP~$i$ to client~$j\in\mathcal{M}_i$ is
\begin{equation} \label{eq:AWGN}
R_{ij} = W\log_2\bigg (1+\frac{P_{ij}G_{ij}}{(N_0+I_{j})W}\bigg)\ ,
\end{equation}
where $W$ is the system bandwidth, $P_{ij}$ is the transmission power of AP~$i$ to client~$j$, $G_{ij}$ is the power gain from AP~$i$ to client~$j$, $N_0$ is the power spectral density of the noise at each receiver, and $I_{j}$ is the interference spectral density at client~$j$. All these assumptions are coherent with the literature and existing standards~\cite{802_11ad, 802_15_3c}. The power gain $G_{ij}$ is modeled as in~\cite{Qiao11}. In particular, we use the Friis transmission equation together with the~\emph{flat-top} transmit/recieve antenna gain model~\cite{Mudumbai09}, where a fixed gain is considered within the beamwidth and zero gain is considered outside the beamwidth of the antenna. In addition, we consider Rayleigh small-scale fading. Thus we have
\be\label{eq:gain_coefficient}
G_{ij} = \frac{G^{\mathrm{Tx}}_{ij}G^{\mathrm{Rx}}_{ij}\lambda^2\alpha_{ij}}{16\pi^2(d_{ij}/d_0)^\eta} \ , \ i\in\mathcal{N}, j\in\mathcal{M}_i \ ,
\ee
where $G^{\mathrm{Tx}}_{ij}$ is the transmit antenna gain from AP~$i$ to client~$j$, $G^{\mathrm{Rx}}_{ij}$ is the receive antenna gain from AP~$i$ to client~$j$, $\lambda$ is the wavelength, $\alpha_{ij}$ is the \emph{fading coefficient} which is an exponentially distributed random variables with unit mean to models the Rayleigh small-scale fading, $d_{ij}$ is the distance between AP $i$ and client $j$, $d_0$ is the \emph{far field reference distance} \cite{Anurag-Manjunath-Kuri-08}, $\eta$ is the path loss exponent,\footnote{$\eta\in [2,6]$ in IEEE~802.11ad networks~\cite{Geng09}.} and $I_{j}$ is the communication interference at client~$j$.
We capitalize on the well studied 60~GHz propagation characteristics~\cite{Mudumbai09}, such as highly directional transmissions with very narrow beamwidths and increased path losses due to the oxygen absorption, in order to assume that the communication interference $I_j$ is very small and does not affect significantly the achievable communication rates in the network.\footnote{In particular, a probabilistic analysis of the interference incurred due to uncoordinated transmissions is presented in~\cite{Mudumbai09}. It is shown that even uncoordinated transmission for different transmit-receive pairs leads to small collision probabilities and therefore, the links in the network can be considered as \emph{pseudo-wired}. That is, interference can essentially be ignored in MAC or higher layers design. Similar assumptions are also supported by using efficient channel allocation in the network~\cite{Athanasiou08} and efficient scheduling algorithms that supports concurrent transmissions~\cite{Qiao11}.} The achievable communication rates given in~(\ref{eq:AWGN}) are used to define the AP utilizations as described in the sequel.


We denote by $Q_j$ the \emph{demanded data rate} of client~$j$. The channel utilization between AP~$i$ and client~$j$ is denoted by $\beta_{ij}$ and is given by the ratio of $Q_{j}$ and $R_{ij}$, i.e.,
\begin{equation} \label{utilization}
\beta_{ij} = \frac{Q_j}{R_{ij}} \ .
\end{equation}
Intuitively, the channel utilization $\beta_{ij}$ gives an indication of the communications performance, in terms of the \emph{potential} loading of the communication channel between AP~$i$ and client~$j$. Thus, the sum of channel utilizations of AP~$i$ (or AP~$i$ utilization) is given by $\sum_{j \in \mathcal{M}_i} \beta_{ij}x_{ij}$, where $(x_{ij})_{j\in\mathcal{M}_i}$ are binary decision variables, which indicate the client association. In particular, for all $i\in\mathcal{N}$ and~$j\in\mathcal{M}_i$
\be\label{eq:decesion_variables}
x_{ij}= \left\{ \begin{array}{ll}
  1 & \ \ \textrm{if client $j$ is associated to AP $i$}\\
  0  & \ \ \mathrm{otherwise} \ .
   \end{array} \right.
\ee
AP~$i$ utilization is a metric reflecting the load. Our goal is to \emph{minimize the maximum AP utilization}. Specifically, the problem can be formally express~as
\begin{IEEEeqnarray}{lcl}\label{eq:mini_max_primal_problem}
\mbox{minimize} & \ \ & \displaystyle\max_{i\in\mathcal{N}}\ \textstyle\sum_{j \in \mathcal{M}_i} \beta_{ij}\ x_{ij}\IEEEyessubnumber\label{eq:mini_max_primal_problem1}\\
\mbox{subject to} & \ \  & Q_jx_{ij} \leq R_{ij}, \  i\in\mathcal{N}, j\in\mathcal{M}_i \IEEEyessubnumber\label{eq:mini_max_primal_problem2}\\
& \ \ & \textstyle\sum_{i \in \mathcal{N}_j} x_{ij}=1, \   j\in\mathcal{M} \IEEEyessubnumber\label{eq:mini_max_primal_problem3} \\
& \ \ & x_{ij}\in\{0,1\}, \ j\in\mathcal{M}, i\in\mathcal{N}_j \IEEEyessubnumber\label{eq:mini_max_primal_problem4} \ ,
\end{IEEEeqnarray}
where the variable is $(x_{ij})_{i\in\mathcal{N}, \ j\in\mathcal{M}_i}$. The main problem parameters are $(\beta_{ij})_{i\in\mathcal{N}, j\in\mathcal{M}_i}$, $(Q_j)_{j\in\mathcal{M}}$, and $(R_{ij})_{i\in\mathcal{N}, j\in\mathcal{M}_i}$\footnote{In general, client association affects the interference levels in wireless networks. However, when the characteristics of the 60 GHz wireless channel are considered (high oxygen absorption, etc) we can argue that the interference can be in the order of noise~\cite{Mudumbai09}, which in turn allows us to suppress the dependence of interference on the client association and consider it fixed~\cite{Qiao11}.}. The constraint (\ref{eq:mini_max_primal_problem2}) assure that the demand of client $j$ is less or equal to the achievable rate from AP $i$ to client~$j$. This constraint can usually be satisfied due to the huge available bandwidth of mmW channel. The constraint~(\ref{eq:mini_max_primal_problem3}) ensures that client $j$ is assigned to only one AP. The constraint~(\ref{eq:mini_max_primal_problem4}) indicates that the decision variables are binary. 

Note that problem~(\ref{eq:mini_max_primal_problem}) differentiates from the \emph{generalized assignment problem}~\cite[\S~8]{Bertsekas-98}, since the objective is to ensure \emph{fairness} in the APs load distribution. Therefore, the existing solution approaches~\cite[\S~8, \S~10]{Bertsekas-98} for the generalized assignment problem, do not apply here. Moreover, the client association problem is combinatorial, and we have to rely on exponentially complex global methods~\cite{Horst-Pardalos-Toai-00} to \emph{solve} it, unless new methods are developed. In the sequel, we present one such efficient solution approach, which, although strictly non optimal, is asymptotically optimal when $M$ grows. 

To conclude the problem formulation and motivate our solution methodology, we emphasize that our problem formulation captures the general client association problem in wireless access networks. However, our model and the solution approach exploit the unique characteristics of 60 GHz wireless access networks, including the \emph{instability of the wireless channel} in this frequency and the \emph{abrupt performance degradations} (due to blockage, etc) that would require immediate resource re-allocation actions (such as dynamic association control).


\section{Solution via dual problem}\label{sec:dual_prob}

Without loss of generality, we assume that $\beta_{ij}\leq1$ for all $i\in\mathcal{N}$ and $j\in\mathcal{M}_i$.\footnote{$\beta_{ij}>1$ means that a client demands a data rate more than what the wireless channel can provide. In this case, we can collect all the $i,j$ pairs for which $\beta_{ij}>1$ and modify the corresponding sets $\mathcal{N}_j$ and $\mathcal{M}_i$, without affecting the optimal value of the original problem~(\ref{eq:mini_max_primal_problem}). In particular, if $\beta_{ij}>1$, then we set $\mathcal{N}_j\asign\mathcal{N}_j\setminus\{i\}$ and $\mathcal{M}_i\asign\mathcal{M}_i\setminus\{j\}$. For example, suppose that $\beta_{12}>1$, and the corresponding sets $\mathcal{N}_2=\{1, 2\}$ and $\mathcal{M}_1=\{1,2,3\}$. Then, we simply modify $\mathcal{N}_j$ and $\mathcal{M}_i$ as follows: $\mathcal{N}_2=\{2\}$, $\mathcal{M}_1=\{1,3\}$.} As a result, the constrain (\ref{eq:mini_max_primal_problem2}) is redundant and can be dropped.
Thus, the standard equivalent epigraph form~\cite[\S~4.1.3]{Boyd-Vandenberghe-04} of problem~(\ref{eq:mini_max_primal_problem}) is 
\begin{IEEEeqnarray}{lcl}\label{eq:mini_max_primal_problem_epi}
\mbox{minimize} & \ \ & t\IEEEyessubnumber\label{eq:mini_max_primal_problem_epi1}\\
\mbox{subject to} & \ \  & \textstyle\sum_{j \in \mathcal{M}_i} \beta_{ij}\ x_{ij}\leq t, \ i\in\mathcal{N} \IEEEyessubnumber\label{eq:mini_max_primal_problem_epi2}\\
& \ \ & \textstyle\sum_{i \in \mathcal{N}_j} x_{ij}=1, \   j\in\mathcal{M} \IEEEyessubnumber\label{eq:mini_max_primal_problem_epi3} \\
& \ \ & x_{ij}\in\{0,1\}, \ j\in\mathcal{M}, i\in\mathcal{N}_j \IEEEyessubnumber\label{eq:mini_max_primal_problem_epi4} \ ,
\end{IEEEeqnarray}
where the variables are $t$ and ${\vec x}=(x_{ij})_{i\in\mathcal{N}, \ j\in\mathcal{M}_i}$. We denote by $p^\star$ the optimal value of the problem~(\ref{eq:mini_max_primal_problem_epi}). Note that problem~(\ref{eq:mini_max_primal_problem_epi}) is a mixed integer linear program (MILP) and is feasible. The existing MILP solvers are, of course, centralized and are typically based on global branch and bound algorithms, where the worst-case complexity grows exponentially with the problem sizes~\cite[\S~1.4.2]{Boyd-Vandenberghe-04}. Even small problems, with a few tens of variables, can take a very long time to be solved. In the sequel, we apply Lagrangian duality to problem~(\ref{eq:mini_max_primal_problem_epi}) to develop a novel solution method that is distributed and fast.

\subsection{Dual problem}
Let us first form the partial Lagrangian by \emph{dualizing} the first constraints of problem~(\ref{eq:mini_max_primal_problem_epi}). To do this we introduce multipliers $\boldsymbol{\lambda}=(\lambda_{i})_{i\in\mathcal{N}}$ for the first set of inequality constraints. Thus, the partial Lagrangian is given by
\begin{equation}\label{eq:partial_Lagrangian}
\begin{split}
L\big(t,{\vec x},\boldsymbol{\lambda}\big)& =\displaystyle t + \mathop{\textstyle{\sum}}_{i \in \mathcal{N}}\lambda_i\bigg(\mathop{\textstyle{\sum}}_{j \in \mathcal{M}_i} \beta_{ij}x_{ij}-t\bigg)\\
& =\displaystyle t\bigg(1-\mathop{\textstyle{\sum}}_{i \in \mathcal{N}} \lambda_{i}\bigg) + \mathop{\textstyle{\sum}}_{j \in \mathcal{M}}\mathop{\textstyle{\sum}}_{i \in \mathcal{N}_j} \beta_{ij}\lambda_i x_{ij}  \ ,
\end{split}
\end{equation}
where the second equality follows from rearranging the terms and from the equivalence of the following two sets:\footnote{This equivalence can be visualized by using a bipartite graph, where the nodes are the elements of two disjoint sets, the set of APs (i.e., $\mathcal{N}$) and the set of clients (i.e., $\mathcal{M}$), and the edges are the potential AP-client associations.}
\be\label{eq:set_equivalence}
\hspace{-0mm}\left\{(i,j)  \left|  i\in\mathcal{N}, j\in\mathcal{M}_i\right. \right\}{\equiv}\left\{(n,m)  \left|  m\in\mathcal{M}, n\in\mathcal{N}_m\right. \right\}\ .
\ee
Let $g(\boldsymbol{\lambda})$ denote the dual function obtained by minimizing the partial Lagrangian~(\ref{eq:partial_Lagrangian}) with respect to $t$ and $\vec x$. For notational simplicity, let us further denote by $\mathcal{X}$ the set of vectors ${\vec x}$ that satisfy the constraints~(\ref{eq:mini_max_primal_problem_epi3})-(\ref{eq:mini_max_primal_problem_epi4}) of problem~(\ref{eq:mini_max_primal_problem_epi}). In particular, $\mathcal{X}$ can be expressed as a Cartesian product of some sets $\mathcal{X}_j\subset\R^{n_j} \ , j\in\mathcal{M},$ i.e.,
\be \label{eq:CartisianProd}
\mathcal{X}= \mathcal{X}_1\times\mathcal{X}_2\times\cdots\times\mathcal{X}_M,
\ee
where $\mathcal{X}_j$ is given by
\be \label{eq:unitSimplex}
\hspace{-0mm}\mathcal{X}_j {=} \hspace{-1mm}\left\{ \hspace{-1mm}\textstyle {\vec x}_j{=}(x_{ij})_{i\in\mathcal{N}_j}\hspace{-1mm}  \left| \displaystyle\mathop{\textstyle{\sum}}_{i \in \mathcal{N}_j} x_{ij}{=}1, \  x_{ij}{\in}\{0,1\}, \ i{\in}\mathcal{N}_j \right. \right\}.
\ee
Thus, the dual function is
\begin{subequations}\label{eq:dual_function}
\begin{align}\label{eq:dual_function_}
   \hspace{-2mm}g\big(\boldsymbol{\lambda}\big) &  = \mathop{\mathop{\textstyle{ \inf}}_{t \in \R}}_{{\vec x}\in\mathcal{X}}\hspace{-0mm}L\big(t,{\vec x},\boldsymbol{\lambda}\big)\\ \label{eq:dual_function_2}
  & = \left\{\hspace{-0mm} \begin{array}{ll}
  \displaystyle \mathop{\textstyle{ \inf}}_{{\vec x}\in\mathcal{X}}\hspace{0mm} \mathop{\textstyle{\sum}}_{j \in \mathcal{M}}\mathop{\textstyle{\sum}}_{i \in \mathcal{N}_j} \beta_{ij}\lambda_i x_{ij} &  \displaystyle\textrm{$\mathop{\textstyle{\sum}}_{i \in \mathcal{N}} \lambda_{i}=1$}\\
  -\infty  &  \mathrm{otherwise} 
   \end{array} \right.\\ \label{eq:dual_function_3}
  & = \left\{\hspace{-0mm}  \begin{array}{ll}
  \displaystyle \mathop{\textstyle{\sum}}_{j \in \mathcal{M}}  \hspace{-0mm}\mathop{\textstyle{ \inf}}_{{\vec x}_j\in\mathcal{X}_j} \bigg(\mathop{ \textstyle{\sum}}_{i \in \mathcal{N}_j} \beta_{ij}\lambda_i x_{ij}\bigg) &  \displaystyle\textrm{$\mathop{\textstyle{\sum}}_{i \in \mathcal{N}} \lambda_{i}=1$}\\
  -\infty  &  \mathrm{otherwise} 
   \end{array} \right.\\ \label{eq:dual_function_4}
  &= \left\{ \begin{array}{ll}
  \displaystyle \mathop{\textstyle{\sum}}_{j \in \mathcal{M}} g_j(\boldsymbol{\lambda})& \ \ \displaystyle\textrm{$\mathop{\textstyle{\sum}}_{i \in \mathcal{N}} \lambda_{i}=1$}\\
  -\infty  & \ \ \mathrm{otherwise} \ ,
   \end{array} \right.
\end{align}
\end{subequations}
where the equality~(\ref{eq:dual_function_2}) follows from that the linear function $t(1-\sum_{i\in\mathcal{N}}\lambda_i)$ is bounded below only when it is identically zero, the equality~(\ref{eq:dual_function_3}) follows from (\ref{eq:CartisianProd})-(\ref{eq:unitSimplex}), and $g_j(\boldsymbol{\lambda})$ in~(\ref{eq:dual_function_4}) is the optimal value of the problem
\begin{equation} \label{eq:sub_problem}
\begin{array}{ll}
\mbox{minimize} & \mathop{\textstyle{\sum}}_{i \in \mathcal{N}_j} \beta_{ij}\lambda_i x_{ij}\\
\mbox{subject to} & {\vec x}_j\in\mathcal{X}_j \ ,
\end{array}
\end{equation}
with the variable ${\vec x}_j$. Even though problem~(\ref{eq:sub_problem}) is combinatorial, it has a closed-form solution given by
\be\label{eq:sub_problem_soln}
x^\star_{ij}= \left\{ \begin{array}{ll}
  1 & \ \ \textrm{$i=\displaystyle\mathop{\arg\min}_{n\in\mathcal{N}_j}\beta_{nj}\lambda_n$}\\
  0  & \ \ \mathrm{otherwise} \ .
   \end{array} \right.
\ee
and is computable very fast\footnote{If $\mathcal{I}=\mathop{\arg\min}_{n\in\mathcal{N}_j}\beta_{nj}(\lambda_n+\mu_{nj})$ is not a singleton, then an arbitrary $i\in\mathcal{I}$ is chosen.}.
Thus, the Lagrange dual problem is given by
\begin{IEEEeqnarray}{lcl}\label{eq:dual_problem}
\mbox{maximize} & \ \ & g(\boldsymbol{\lambda})=\textstyle\sum_{j \in \mathcal{M}} g_j(\boldsymbol{\lambda})\IEEEyessubnumber\label{eq:dual_problem1}\\
\mbox{subject to} & \ \  & \textstyle\sum_{i \in \mathcal{N}} \lambda_{i}=1 \IEEEyessubnumber\label{eq:dual_problem2}\\
& \ \ & \textstyle\lambda_i\geq 0, \ i\in\mathcal{N} \IEEEyessubnumber\label{eq:dual_problem3}  \ ,
\end{IEEEeqnarray}
where the variables is $\boldsymbol{\lambda}$. We denote by $d^\star$ the optimal value of the problem~(\ref{eq:dual_problem}) that will be useful later. Note that the Lagrange dual problem~(\ref{eq:dual_problem}) is a convex optimization problem, even though the primal problem~(\ref{eq:mini_max_primal_problem_epi}) is \emph{not} convex~(see \cite[\S~5.2]{Boyd-Vandenberghe-04}). Let us next focus on the dual problem~(\ref{eq:dual_problem}) and its solution method, which allows us to find a good feasible solution to the original problem~(\ref{eq:mini_max_primal_problem_epi}).

\subsection{Solving the dual problem via projected subgradient method}\label{subsec:soln}
The objective function $g(\boldsymbol{\lambda})$ is, in general, a non-smooth (therefore non-differentiable) concave function. A common approach to handle such non-differentiable functions is the subgradient method~\cite{Boyd-EE364b-SubGradMethods-07}, because gradient based algorithms cannot be applied. Therefore, the projected subgradient method~\cite{Bertsekas-99,Boyd-EE364b-SubGradMethods-07} is used to solve the dual problem~(\ref{eq:dual_problem}).

\begin{figure}[t]
\noindent \hrulefill
{\label{Association}

\vspace{-1mm}{ \scriptsize \textbf{\emph{DAA}: Distributed algorithm for client association }}}
\begin{enumerate}
\scriptsize
\item[1] Initialization: The local channel utilizations, i.e., $(\beta_{ij})_{j\in\mathcal{M}_i}$ are given, at every AP $i$. Set subgradient iteration index $k=1$. Each AP~$i$ broadcasts the initial \emph{feasible} prices $\lambda^{(k)}_i$ to its local clients $j\in\mathcal{M}_i$.

\item[2] Every client $j$ sets $\boldsymbol\lambda=\boldsymbol\lambda^{(k)}$ and locally determines its association by solving problem~(\ref{eq:sub_problem}). Denote by AP~$i_j$ the AP for which $x_{ij}=1$. 

\item[3] Client $j~(\in\mathcal{M})$ signals \emph{only} to AP~$i_j$ an do not send any signalling to other AP~$i$~s, where $i\in\mathcal{N}_j \setminus \{i_j\}$.

\item[4] Every AP~$i$ computes $u_i$ by summing $\beta_{ij}$ over the clients~$j\in\mathcal{M}_i$, who had signalled in step~3, see~(\ref{eq:subgrad_elements}).

\item[5] Subgradient iteration: APs communicate and form ${\vec u}^{(k)}$ by combining each $u_i$ and by performing~(\ref{eq:prjected_subgrad_method}) to compute $\boldsymbol\lambda^{(k+1)}$.

\item[6] Stopping criterion: if the stopping criterion is satisfied, STOP. Otherwise, set $k=k+1$, each AP~$i$ broadcasts the \emph{feasible} prices $\lambda^{(k)}_i$ to its local clients $j\in\mathcal{M}_i$ and go to step~2.
\end{enumerate}
\vspace{-3mm}
\noindent\hrulefill\vspace{-6mm}
\end{figure}

First, we denote by $\vec u$ a subgradient of $-g$ at a feasible $\boldsymbol{\lambda}$, where ${\vec u}=(u_i)_{i\in\mathcal{N}}$.
Specifically,
\be\label{eq:subgrad_elements}
\textstyle u_i = -\sum_{j\in\mathcal{M}_i}\beta_{ij}x^\star_{ij} \ ,
\ee
where $x^\star_{ij}$ for all $j\in\mathcal{M}$ and $i\in\mathcal{N}_j$ is obtained as the solution of problem~(\ref{eq:sub_problem}) for all $j\in\mathcal{M}$. Thus the projected subgradient method is given by
\be\label{eq:prjected_subgrad_method}
\boldsymbol{\lambda}^{(k+1)} =  P\big(\boldsymbol{\lambda}^{(k)} - \alpha_k {\vec u}^{(k)}\big) \ ,
\ee
where $k$ is the current iteration index of the subgradient method, $\alpha_k>0$ is the $k$th step size\footnote{We chose \emph{square summable but not summable} step size (e.g., $\alpha_k=a/k$, where $0< a<\infty$), that guarantees the asymptotic convergence of the subgradient method~\cite{Boyd-EE364b-SubGradMethods-07}.}, and $P$ is Euclidean projection onto the unit simplex $\Pi=\{\boldsymbol{\lambda} \left| \ \sum_{i \in \mathcal{N}} \lambda_{i}=1, \lambda_i\geq0 \right.\}$ (see~\cite[Excercise~2.1.12]{Bertsekas-99}). By employing (\ref{eq:prjected_subgrad_method}) in an iterative manner, we can \emph{solve} the dual problem~(\ref{eq:dual_problem}). However, recovering a primal feasible solution is nontrivial because the original problem~(\ref{eq:mini_max_primal_problem_epi}) is noncovex. A discussion of these nontrivial issues and how to find a good feasible solution is deferred to~\S~\ref{subsec:RecoverPrimal}, to maintain a cohesive presentation. Let us next describe how the computation of the solution of problem~(\ref{eq:dual_problem}) is performed in a distributed manner.


\subsection{Distributed algorithm for client association (\emph{DAA})}\label{subsec:distributed_alg}
Recall that the dual function $g(\boldsymbol\lambda)$ is separable among the clients $j\in\mathcal{M}$, see the objective function~(\ref{eq:dual_problem1}) of problem~(\ref{eq:dual_problem}). Therefore, the subgradient components~(\ref{eq:subgrad_elements}) for the subgradient method (\ref{eq:prjected_subgrad_method}) can be computed by coordinating the problem~(\ref{eq:sub_problem}) for all $j\in\mathcal{M}$. This suggests \emph{DAA} presented at the top of this~page.

The first step initializes \emph{DAA}. Step~2 represents the optimization performed in a decentralized fashion by each client for fixed $\boldsymbol \lambda$. The optimization at each client~$j$ is a very simple operation and is to find the AP~$i_j$, where, $i_j=\arg\min_{i\in\mathcal{N}_j}\beta_{ij}\lambda_i$~(see~(\ref{eq:sub_problem_soln})). Step~3 requires signaling between clients and APs. In particular, each client~$j$ signals only to AP~$i_j$. This signalling process can be performed very efficiently, e.g., binary signaling. As a result, we have a light protocol between clients and APs. In Step~4, each AP~$i$ locally computes $u_i$~(see~\ref{eq:subgrad_elements}), which is the summation of $\beta_{ij}$ over the client who signalled the AP. Step~5 requires AP coordination. In particular, APs coordinate to perform~(\ref{eq:prjected_subgrad_method}), which is the projection of a point onto the unit simplex. The result of this operation is given by the solution to a convex optimization problem, which can be carried out efficiently. Step~6, is the stopping criterion for the algorithm. If the stopping criterion is satisfied, \emph{DAA} terminates. Otherwise, the algorithm continues in an iterative manner. In practice, a natural stopping criterion would be running it for a fixed number of iterations.

\subsection{Distributed implementation over existing standards}\label{subsec:DAA_over_standards}

Let us discuss now how the actual implementation of the proposed \emph{DAA} algorithm could be achieved on top of the existing standards, IEEE 802.15.3c and IEEE 802.11ad. The distributed algorithm is performed periodically in the system to ensure the balanced operation of the network. The period of the execution is given by the control messages established by the medium access control protocol, as we see in detail~below.

The iterative association algorithm does not have to be executed every time a client initiates an association or a handoff process. We assume that the \emph{newcomer} client follows the association mechanism that IEEE 802.15.3c and IEEE 802.11ad define, based on the RSSI. Then, our algorithm is periodically executed to \emph{correct} possible suboptimal client associations in the network by reallocating the available resources. As mentioned before the distributed nature of the association algorithm is crucial, in the direction of offloading the APs and make good use of the small computational resources that the clients may provide. Both 802.15.3c and 802.11ad define control frames (denoted as beacon frames) that are periodically broadcasted by the APs in the network. The APs can utilize these frames to trigger the initialization of \emph{DAA} and carry the required information to the clients. The APs inform their clients about the initialization of \emph{DAA} by setting a special bit into the beacon frame. Thus, the clients are ready to cooperate towards the optimal resource allocation in the network. The information required by the algorithm can be carried in the control frames or piggy-backed to the data frames that the APs send to the clients \cite{Athanasiou08}. Moreover, the clients are piggy-backing the information in the data frames that they send to the APs. Thus, the algorithm is executed in perfect harmony with the networking protocols, without interrupting the actual network operation (data communication) and without causing extra delays. 

\section{Algorithm Properties}\label{sec:algorithm_prop}
In this section, we first show the convergence performance for the proposed algorithm. Then we show how to recover a \emph{good} primal feasible solution. Next, we highlight some sufficient conditions under which strong duality holds for the MILP~(\ref{eq:mini_max_primal_problem_epi}) followed by a couple of examples. Finally, we show analytically the \emph{asymptotic optimality} of the algorithm, where the \emph{relative duality gap} diminishes to zero as the number of clients in the system grows.

Recall that $p^\star$ is the optimal value of the original MILP~(\ref{eq:mini_max_primal_problem_epi}) and $d^\star$ is the optimal value of the associated dual problem~(\ref{eq:dual_problem}). We refer to $p^\star$ as the \emph{primal optimal value}, $d^\star$ as the \emph{dual optimal value}, $(p^\star-d^\star)$ as the \emph{optimal duality gap}, and $(p^\star-d^\star)/p^\star$  as the \emph{optimal relative duality gap}, which are useful in the rest of the paper. 

\subsection{Convergence}\label{subsec:Convergence}
\emph{DAA} essentially solves the dual problem~(\ref{eq:dual_problem}) by using the projected subgradients method and the convergence of the algorithm is established by the following proposition:
\begin{prop}\label{prop:convergence}
Let $g^{(k)}_{\mathrm{best}}$ denote the \emph{best} dual objective value found after $k$ subgradient iterations, i.e., $g^{(k)}_{\mathrm{best}}=\max\{g(\boldsymbol\lambda^{(1)}),\ldots,g(\boldsymbol\lambda^{(k)})\}$. Then, $\forall\epsilon>0$ $\exists n\geq 1$ such that $\forall k$ $k\geq n\Rightarrow \big(d^\star-g^{(k)}_{\mathrm{best}}\big)<\epsilon$.
\end{prop}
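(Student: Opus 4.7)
The plan is to invoke the standard convergence theorem for the projected subgradient method applied to a concave maximization problem over a compact convex set, after verifying its three hypotheses in our setting: (i) the feasible set (the unit simplex) is closed convex, so the Euclidean projection $P$ is well defined and non-expansive; (ii) the objective $g$ is concave (since it is the infimum of a family of affine functions of $\boldsymbol\lambda$, as is visible from the dual function construction in \eqref{eq:dual_function}); and (iii) the step-size sequence $\{\alpha_k\}$ is square summable but not summable, by the choice $\alpha_k=a/k$ stated in the footnote preceding \eqref{eq:prjected_subgrad_method}.

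The key technical step I would carry out explicitly is a uniform bound on the subgradients ${\vec u}^{(k)}$ generated by the algorithm. From \eqref{eq:subgrad_elements} and \eqref{eq:sub_problem_soln}, each component satisfies
\[
|u_i^{(k)}| \;=\; \Big|\textstyle\sum_{j\in\mathcal{M}_i}\beta_{ij}x^{\star}_{ij}\Big| \;\leq\; \textstyle\sum_{j\in\mathcal{M}_i}\beta_{ij}\;\leq\; |\mathcal{M}_i|\;\leq\; M,
\]
where the second inequality uses the standing assumption (from the opening of \S~\ref{sec:dual_prob}) that $\beta_{ij}\leq 1$. Hence $\|{\vec u}^{(k)}\|_2\leq M\sqrt{N}=:G$ uniformly in $k$.

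With these three ingredients in place, the classical convergence analysis of the projected subgradient method (see, e.g., the standard analysis in \cite{Boyd-EE364b-SubGradMethods-07,Bertsekas-99}) yields the inequality
\[
d^\star - g^{(k)}_{\mathrm{best}} \;\leq\; \frac{R^2 + G^2\sum_{\ell=1}^{k}\alpha_\ell^2}{2\sum_{\ell=1}^{k}\alpha_\ell},
\]
where $R$ is the diameter of the unit simplex (bounded by $\sqrt{2}$) and $G$ is the subgradient bound above. Because $\sum_\ell\alpha_\ell^2<\infty$ while $\sum_\ell\alpha_\ell=\infty$, the right-hand side tends to $0$ as $k\to\infty$. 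The proposition then follows: for any $\epsilon>0$, choose $n$ so large that the right-hand side is below $\epsilon$ for all $k\geq n$.

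The main potential obstacle is the subgradient boundedness, but as shown above this reduces to a one-line calculation exploiting $\beta_{ij}\leq 1$ together with the finite cardinalities $|\mathcal{M}_i|\leq M$. Everything else is a direct citation of the textbook convergence result for projected subgradient methods with square-summable but not summable step sizes, so no novel analysis is required beyond verifying the hypotheses in the present notation.
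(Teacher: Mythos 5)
Your proposal is correct and follows essentially the same route as the paper's proof: the classical projected subgradient analysis over the unit simplex with a uniform subgradient bound and square-summable-but-not-summable step sizes $\alpha_k=a/k$, yielding the same bound $d^\star-g^{(k)}_{\mathrm{best}}\leq\big(R^2+G^2\sum_{\ell=1}^k\alpha_\ell^2\big)/\big(2\sum_{\ell=1}^k\alpha_\ell\big)$ whose right-hand side vanishes as $k\to\infty$. The only difference is cosmetic: your subgradient bound $G=M\sqrt{N}$ is looser than the paper's $G=\sqrt{\sum_{i\in\mathcal{N}}\big(\sum_{j\in\mathcal{M}_i}\beta_{ij}\big)^2}$ (which the paper later uses to discuss convergence speed), but either constant suffices for the asymptotic claim, and you cite the textbook recursion that the paper rederives explicitly.
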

\begin{proof}
The proof is built on the material presented in \cite[\S~3.2]{Boyd-EE364b-SubGradMethods-07}\cite{Bertsekas-99}. Let us denote by ${\boldsymbol\lambda}^\star$ the optimal solution of dual problem~(\ref{eq:dual_problem}). Thus, we have
\begin{subequations}
\begin{align}\label{eq:Convergence_1}
  ||{\boldsymbol\lambda}^{(k+1)}-{\boldsymbol\lambda}^\star||^2_2 & = ||P\big(\boldsymbol{\lambda}^{(k)} - \alpha_k {\vec u}^{(k)}\big)-{\boldsymbol\lambda}^\star||^2_2\\ \label{eq:Convergence_2}
  & \leq ||\big(\boldsymbol{\lambda}^{(k)} - \alpha_k {\vec u}^{(k)}\big)-{\boldsymbol\lambda}^\star||^2_2\\ \label{eq:Convergence_3}
  & = ||\boldsymbol{\lambda}^{(k)} -{\boldsymbol\lambda}^\star||^2_2-2\alpha_k{{\vec u}^{(k)\mbox{\scriptsize T}}}(\boldsymbol{\lambda}^{(k)} -{\boldsymbol\lambda}^\star)\nonumber \\
  & \hspace{4mm} + \alpha^2_k||{\vec u}^{(k)}||^2_2\\ \label{eq:Convergence_4}
  & \leq ||\boldsymbol{\lambda}^{(k)} -{\boldsymbol\lambda}^\star||^2_2-2\alpha_k(g({\boldsymbol\lambda}^\star)-g(\boldsymbol{\lambda}^{(k)}) ) \nonumber \\
  & \hspace{4mm} + \alpha^2_k||{\vec u}^{(k)}||^2_2\\ \label{eq:Convergence_5}
  & =||\boldsymbol{\lambda}^{(k)} -{\boldsymbol\lambda}^\star||^2_2-2\alpha_k(d^\star-g(\boldsymbol{\lambda}^{(k)}) ) \nonumber \\
  & \hspace{4mm} + \alpha^2_k||{\vec u}^{(k)}||^2_2 \ ,
\end{align}
\end{subequations}
where~(\ref{eq:Convergence_1}) follows from~(\ref{eq:prjected_subgrad_method}), (\ref{eq:Convergence_2}) follows from that the projection onto unit simplex $\Pi$ always decrease the distance of a point to every point in $\Pi$ and in particular to the optimal point ${\boldsymbol\lambda}^\star$, (\ref{eq:Convergence_4}) follows from the definition of subgradient, i.e., $-g(\boldsymbol\lambda^\star)\geq -g(\boldsymbol\lambda^{(k)}) + {\vec u}^{(k)\mbox{\scriptsize T}}(\boldsymbol\lambda^\star-\boldsymbol\lambda^{(k)})$, and (\ref{eq:Convergence_5}) follows from that $d^\star = g(\boldsymbol\lambda^\star)$. Recursively applying~(\ref{eq:Convergence_5}) and rearranging the terms, we get
\begin{subequations}
\begin{align}\label{eq:Convergence2_1}
 \hspace{-2mm}2\displaystyle \mathop{\textstyle{\sum}}_{l =1}^{k} \alpha_l(d^\star-g(\boldsymbol\lambda^{(l)})) & \textstyle = -||{\boldsymbol\lambda}^{(k+1)}-{\boldsymbol\lambda}^\star||^2_2+||{\boldsymbol\lambda}^{(1)}-{\boldsymbol\lambda}^\star||^2_2\nonumber \\
 &\hspace{4mm}+\displaystyle \mathop{\textstyle{\sum}}_{l =1}^{k}\alpha^2_l||{\vec u}^{(k)}||^2_2\\ \label{eq:Convergence2_2}
  & \textstyle \leq R^2 + G^2 \sum_{l=1}^{k}\alpha^2_l\ ,
\end{align}
\end{subequations}
where (\ref{eq:Convergence2_2}) follows from that $||{\boldsymbol\lambda}^{(k+1)}-{\boldsymbol\lambda}^\star||_2\geq0$, $||\tilde{\boldsymbol\lambda}-{\boldsymbol\lambda}^\star||_2\leq R=\sqrt{2}$ for any $\tilde{\boldsymbol\lambda}\in\Pi$, and the norm of any subgradient ${\vec u}$ of $-g$  at any $\tilde{\boldsymbol\lambda}\in\Pi$ is bounded, i.e.,\footnote{Compare to~(\ref{eq:subgrad_elements}).}
\be\label{eq:normBound}
\hspace{-0mm}||{\vec u}||_2\leq G=\sqrt{\textstyle\sum_{i\in\mathcal{N}}\big(\sum_{j\in\mathcal{M}_i}\beta_{ij}\big)^2}. \quad
\ee
Moreover, clearly we have
\be\label{eq:Convergence3_1}
d^\star-g^{(k)}_{\mathrm{best}}\leq d^\star-g(\boldsymbol\lambda^{(l)}), \ l=1,\ldots,k \ .
\ee
Thus, from~(\ref{eq:Convergence2_2}), (\ref{eq:Convergence3_1}), and by noting that step size $\alpha_l=a/l, \ 0<a<\infty$ is \emph{square summable} \big(i.e., $\sum_{l=1}^\infty\alpha^2_l=a^2\pi/6$\big), we obtain an upper bound on $d^\star-g^{(k)}_{\mathrm{best}}$ as
\begin{subequations}\label{eq:Convergence4_1}
\begin{align}\label{eq:Convergence4_11}
\textstyle d^\star-g^{(k)}_{\mathrm{best}} & \leq \textstyle(R^2 + G^2 \sum_{l=1}^{k}\alpha^2_l)/( 2\sum_{l=1}^{k}\alpha_l)\\ \label{eq:Convergence4_12}
& \leq  \textstyle\big(R^2/2 + a^2G^2 \pi/12\big)/(\sum_{l=1}^{k}\alpha_l) \ .
\end{align}
\end{subequations}
Since $\sum_{l=1}^{k}\alpha_l$ is strictly monotonically increasing in $k$ (it grows without bound as $k\rightarrow\infty$), for any $\epsilon >0$ we can always find a integer $n\geq 1$ such that $\sum_{l=1}^k\alpha_l> \epsilon \ (R^2/2 +a^2G^2 \pi/12)$ if $k\geq n$, which concludes the proof.
\end{proof}

The bound derived in~(\ref{eq:Convergence4_12}) allows us to predict some key behaviors of the convergence of the proposed algorithm. To see this, we note from~(\ref{eq:normBound}) that the numerator of the bound depends on $(\beta_{ij})_{i\in\mathcal{N}, \ j\in\mathcal{M}_i}$ for fixed $a$. Now suppose that the number of clients increases. This forces $G$ to increase as well. As a result, the corresponding total iterations to reach the given accuracy $\epsilon$ will also grow. Roughly speaking, this means that, for fixed number of APs, the larger the number of clients is, the larger the total number of iterations required for the convergence of \emph{DAA}. On the other hand, suppose that the user distribution is such that $\sum_{j\in\mathcal{M}_i}\beta_{ij}$ is roughly the same for each AP~$i$. Thus, if the total number of APs is increased, then $G$ will become larger and as a result, the total number of iterations to convergence is increased as well. These algorithm behaviors are numerically illustrated in \S~\ref{sec:numerical_results}.

\vspace{-3mm}
\subsection{Recovering a feasible primal point}\label{subsec:RecoverPrimal}
As we discussed in \S~\ref{subsec:Convergence}, we can \emph{solve} the dual problem to any given accuracy to yield the dual optimal value $d^\star$ and the dual optimal solution~$\boldsymbol\lambda^\star$. If the primal problem is convex, from $d^\star$ and $\boldsymbol\lambda^\star$, we can usually obtain the primal optimal value $p^\star$ and primal optimal solution $(t^\star, {\vec x}^\star)$~\cite[\S~5.5.5]{Boyd-Vandenberghe-04}. However, recall that the original MILP~(\ref{eq:mini_max_primal_problem_epi}) is a \emph{nonconvex} problem. Therefore, unlike convex problems, there is \emph{no guarantee} that from $d^\star$ and $\boldsymbol\lambda^\star$, we obtain $p^\star$ and $(t^\star, {\vec x}^\star)$.\footnote{The first component $t^\star$ of the primal optimal solution of MILP~(\ref{eq:mini_max_primal_problem_epi}) and the primal optimal value $p^\star$ are clearly the same.} Nevertheless, in the case of MILP~(\ref{eq:mini_max_primal_problem_epi}), a \emph{primal feasible point} is obtained during each iteration~$k$ of the algorithm~(see step~2 of the algorithm). Thus, it is natural to go for the best choice, among all the primal feasible points obtained so far. For example, a good approximation for the primal optimal value would be
\be\label{eq:best_primal_objective_afrer_k_itr}
p^{(k)}_{\mathrm{best}}=\min\{t^{(1)},\ldots,t^{(k)}\} \ ,
\ee
where $(t^{(k)}, {\vec x}^{(k)})$ is the primal feasible point in iteration~$k$.\footnote{APs can compute $(t^{(k)}, {\vec x}^{(k)})$ easily by using the client signaling they received at step~2 and the AP coordination at step~5. For example, $t^{(k)}=||-{\vec u}^{(k)}||_{\infty}$.}
A good feasible point would be $(t^{(k)}_{\mathrm{best}}, {\vec x}^{(k)}_{\mathrm{best}})$, which is the primal feasible point that corresponds to $p^{(k)}_{\mathrm{best}}$.

Even though the value $p^{(k)}_{\mathrm{best}}$ is not usually as good as the primal optimal value $p^\star$, monte Carlo simulations show that it is a good approximate value with $k$ on the order of hundreds or more, e.g., $k\geq 100$ (see \S~\ref{sec:numerical_results}). There is no clear analytical explanations of these fortuitous encounters, especially because the original problem~(\ref{eq:mini_max_primal_problem_epi}) is nonconvex~\cite[\S~6.3]{Bertsekas-99}.

\vspace{-3mm}
\subsection{Duality gap}\label{subsec:duality_gap}
The \emph{duality gap} $(p^\star-d^\star)$ is one of the important metric that can be used to quantify the performance of the proposed \emph{DAA}. Note that, in general, the duality gap for MILP~(\ref{eq:mini_max_primal_problem_epi}) is not zero, because the problem is nonconvex. Therefore, it is not surprising that deriving general conditions under which the strong duality for MILP~(\ref{eq:mini_max_primal_problem_epi}) is very difficult. Nevertheless, we first provide some examples to highlight sufficient conditions for strong duality for MILP~(\ref{eq:mini_max_primal_problem_epi}). The latter part of this section derives an analytical bound on the duality gap. Moreover, we show the \emph{asymptotic optimality} of the algorithm, where the \emph{relative duality gap} $(p^\star-d^\star)/p^\star$ diminishes to zero as the number of clients in the system grows. Such asymptotic results are indeed important from a theoretical, as well as from a practical perspective, see for example the duality results associated with the well known Knapsack problem~\cite{Bertsekas-99}.

The following proposition establishes a simple result, which is instrumental to study zero duality.

\begin{prop}\label{prop:dualoptimal_Vs_relaxed_Primal}
Let $p^\star_{\mathrm{relax}}$ denote the optimal value of the linear programming~(LP) relaxation of
problem~(\ref{eq:mini_max_primal_problem_epi}), i.e.,
\begin{IEEEeqnarray}{lcl}\label{eq:mini_max_primal_problem_epi_relax}
\mbox{minimize} & \ \ & t\IEEEyessubnumber\label{eq:mini_max_primal_problem_epi_relax1}\\
\mbox{subject to} & \ \  & \textstyle\sum_{j \in \mathcal{M}_i} \beta_{ij}\ x_{ij}\leq t, \ i\in\mathcal{N} \IEEEyessubnumber\label{eq:mini_max_primal_problem_epi_relax2}\\
& \ \ & \textstyle\sum_{i \in \mathcal{N}_j} x_{ij}=1, \   j\in\mathcal{M} \IEEEyessubnumber\label{eq:mini_max_primal_problem_epi_relax3} \\
& \ \ &  0\leq x_{ij}\leq 1, \ j\in\mathcal{M}, i\in\mathcal{N}_j \IEEEyessubnumber\label{eq:mini_max_primal_problem_epi_relax4} \ ,
\end{IEEEeqnarray}
with variables $t$ and ${\vec x}=(x_{ij})_{i\in\mathcal{N}, \ j\in\mathcal{M}_i}$. Then $d^\star = p^\star_{\mathrm{relax}}$.
\end{prop}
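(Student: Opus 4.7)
The plan is to show that the Lagrangian dual value $d^\star$ equals the LP relaxation value $p^\star_{\mathrm{relax}}$ by exploiting the so-called \emph{integrality property} of the sets $\mathcal{X}_j$ defined in~(\ref{eq:unitSimplex}). Specifically, since the inner minimization of the Lagrangian (after dualizing the capacity constraints) decouples across $j\in\mathcal{M}$ into linear programs over $\mathcal{X}_j$, and since $\mathcal{X}_j$ consists precisely of the extreme points of its convex hull (the standard unit simplex), replacing $\mathcal{X}_j$ by its LP relaxation leaves the dual function unchanged. Strong LP duality then gives the equality.

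First, I would denote by $\bar{\mathcal{X}}_j=\{\vec x_j\mid \sum_{i\in\mathcal{N}_j}x_{ij}=1,\ 0\leq x_{ij}\leq 1,\ i\in\mathcal{N}_j\}$ the LP relaxation of $\mathcal{X}_j$, and observe that $\bar{\mathcal{X}}_j$ is a unit simplex whose extreme points are exactly the standard basis vectors, i.e., the elements of $\mathcal{X}_j$. Hence $\mathrm{conv}(\mathcal{X}_j)=\bar{\mathcal{X}}_j$. Since the objective of the subproblem~(\ref{eq:sub_problem}) is linear in $\vec x_j$ and $\bar{\mathcal{X}}_j$ is a compact polytope, the minimum is attained at an extreme point; therefore
\begin{equation}\label{eq:same_subproblem_value}
\min_{\vec x_j\in\mathcal{X}_j}\ \textstyle\sum_{i \in \mathcal{N}_j} \beta_{ij}\lambda_i x_{ij}\ =\ \min_{\vec x_j\in\bar{\mathcal{X}}_j}\ \textstyle\sum_{i \in \mathcal{N}_j} \beta_{ij}\lambda_i x_{ij}.
\end{equation}

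Next, I would apply the same Lagrangian construction of \S~\ref{sec:dual_prob} to the LP relaxation~(\ref{eq:mini_max_primal_problem_epi_relax}). Let $\bar g(\boldsymbol\lambda)$ denote the resulting dual function; it is obtained by dualizing~(\ref{eq:mini_max_primal_problem_epi_relax2}) and minimizing over $t\in\R$ and $\vec x\in\bar{\mathcal{X}}_1\times\cdots\times\bar{\mathcal{X}}_M$. Retracing the steps~(\ref{eq:dual_function_})--(\ref{eq:dual_function_4}) with $\mathcal{X}_j$ replaced by $\bar{\mathcal{X}}_j$ and invoking~(\ref{eq:same_subproblem_value}) at each $j$, one obtains $\bar g(\boldsymbol\lambda)=g(\boldsymbol\lambda)$ for every feasible $\boldsymbol\lambda$. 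Consequently the two dual problems have identical optimal values, and in particular the dual optimal value of the LP~(\ref{eq:mini_max_primal_problem_epi_relax}) equals $d^\star$.

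Finally, because problem~(\ref{eq:mini_max_primal_problem_epi_relax}) is an ordinary LP (it is feasible and clearly bounded below by $0$ from constraint~(\ref{eq:mini_max_primal_problem_epi_relax2})), strong LP duality~\cite[\S~5.2]{Boyd-Vandenberghe-04} guarantees that its primal optimal value $p^\star_{\mathrm{relax}}$ coincides with its dual optimal value, which, by the previous step, is $d^\star$. Combining the two equalities yields $d^\star=p^\star_{\mathrm{relax}}$. The conceptually delicate point is essentially the integrality observation~(\ref{eq:same_subproblem_value}): one must verify that the Lagrangian subproblem behaves identically over $\mathcal{X}_j$ and over its convex hull; everything else is mechanical recomputation of the dual function and a direct appeal to LP strong duality.
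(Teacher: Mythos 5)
Your proposal is correct and follows essentially the same route as the paper's proof: you show the Lagrangian dual of the LP relaxation coincides with the dual~(\ref{eq:dual_problem}) because the linear subproblem over $\conv(\mathcal{X}_j)$ (the unit simplex) attains its minimum at a vertex, i.e., at a point of $\mathcal{X}_j$, and then invoke strong LP duality for~(\ref{eq:mini_max_primal_problem_epi_relax}). The only cosmetic difference is that the paper cites a vertex-attainment result \cite[Corollary~32.3.4]{Rockafellar-70} where you argue directly via extreme points of a compact polytope, which is equivalent.
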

\begin{proof}
Note that problem~(\ref{eq:mini_max_primal_problem_epi}) is always feasible. The proof is based on two key results:
\begin{enumerate}
\item[(a)] for problem~(\ref{eq:mini_max_primal_problem_epi_relax}), we always have strong duality, i.e., $p^\star_{\mathrm{relax}}=d^\star_{\mathrm{relax}}$, and
\item[(b)] the dual of problem~(\ref{eq:mini_max_primal_problem_epi_relax}) is identical to problem~(\ref{eq:dual_problem}), and therefore $d^\star_{\mathrm{relax}}=d^\star$, where $d^\star_{\mathrm{relax}}$ denotes the dual optimal value of problem~(\ref{eq:mini_max_primal_problem_epi_relax}).
\end{enumerate}
In particular, (a) is guaranteed from strong duality results for linear programs, see~\cite[\S~5.2.3]{Boyd-Vandenberghe-04}. To prove (b), we first note the following: the partial Lagrangian obtained by dualizing first constraint of problem~(\ref{eq:mini_max_primal_problem_epi_relax}) is identical to~(\ref{eq:partial_Lagrangian}), where we consider same notations for the dual variables. This is a slight abuse of notation, but it helps in the clarity of the exposition. Let $h({\boldsymbol \lambda})$ denote the dual function obtained by minimizing the partial Lagrangian as given by~[compare to~(\ref{eq:dual_function})]
\begin{equation}\label{eq:dual_function_relaxed}
h({\boldsymbol \lambda}) =  \left\{ \begin{array}{ll}
  \displaystyle \mathop{\textstyle{\sum}}_{j \in \mathcal{M}} h_j(\boldsymbol{\lambda})& \ \ \textrm{$\mathop{\textstyle{\sum}}_{i \in \mathcal{N}} \lambda_{i}=1$}\\
  -\infty  & \ \ \mathrm{otherwise} \ ,
   \end{array} \right.
\end{equation}
where $h_j(\boldsymbol{\lambda})$ is the optimal solution of a problem very similar to (\ref{eq:sub_problem}), except that the constraint ${\vec x}_j\in\mathcal{X}_j$ is replaced by ${\vec x}_j\in\conv(\mathcal{X}_j)$. Note that $\conv(\mathcal{X}_j)\in\R^{n_j}$ is a \emph{unit simplex}, and therefore the optimal value $h_j(\boldsymbol{\lambda})$ is attained at one of the vertexes ${\vec x}_j=(x_{ij})_{i\in\mathcal{N}_j}$ of the unit simplex~\cite[Corollary~32.3.4]{Rockafellar-70}. Specifically, the components of the vertex ${\vec x}_j$ is identically given by~(\ref{eq:sub_problem_soln}). As a result, $h_j(\boldsymbol{\lambda})=g_j(\boldsymbol{\lambda})$ for all $j\in\mathcal{M}$ and $h({\boldsymbol \lambda})=g({\boldsymbol \lambda})$.
\end{proof}

Note that if problem~(\ref{eq:mini_max_primal_problem_epi_relax}), the LP relaxation of problem~(\ref{eq:mini_max_primal_problem_epi}), has integer solutions, then we can easily show that the optimal value $p^\star$ of the original MILP~(\ref{eq:mini_max_primal_problem_epi}) is equal to the optimal value $p^\star_{\mathrm{relax}}$, i.e., $p^\star =p^\star_{\mathrm{relax}}$. Therefore, from \emph{Proposition~\ref{prop:dualoptimal_Vs_relaxed_Primal}}, we have $p^\star =d^\star$. In other words, if problem~(\ref{eq:mini_max_primal_problem_epi_relax}) has integer solutions, then \emph{strong duality} holds for the original MILP~(\ref{eq:mini_max_primal_problem_epi}).
Thus, it is natural to investigate the conditions, under which the LP~(\ref{eq:mini_max_primal_problem_epi_relax}) has integer solutions. Roughly speaking, there are not many results that establish conditions on LPs, beyond \emph{total unimodularity}~\cite[\S~9]{Truemper-98} of the associated problem matrices, under which they have integer solutions. Unfortunately, particularized to LP~(\ref{eq:mini_max_primal_problem_epi_relax}), the related matrices are not total unimodular, and therefore the theoretical implications of total unimodularity does not applies~\cite[\S~9]{Truemper-98}. Nevertheless, \emph{Proposition~\ref{prop:dualoptimal_Vs_relaxed_Primal}} allows us to imagine special cases of MILP~(\ref{eq:mini_max_primal_problem_epi}), where we have strong duality. We now present two such examples.

\begin{figure}[t]
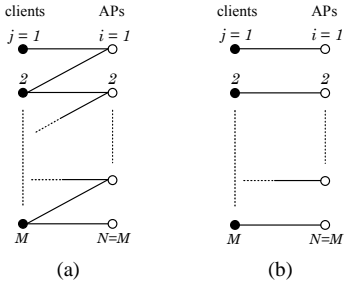

\centering
\subfigure[]
{\includegraphics[height=0.13\textheight]{zero_duality_ex1.epsi}
\label{fig:NW1}}
\goodgapp
\subfigure[]
{\includegraphics[height=0.13\textheight]{zero_duality_ex1_sol.epsi}
\label{fig:NW1_sol}}\vspace{-2mm}
\caption{Examples of network topology~1 (a)~The initial communication graph; (b)~The optimal association}
\label{fig:NW1_and_sol}
\vspace{-3.0mm}
\end{figure}

\begin{Ex}\label{ex:network1}
Suppose that the initial clients and APs communication is as shown in Figure~\ref{fig:NW1}. In particular, there are $M$ clients and $N=M$ APs. Each client $j\in\{2,3,\ldots,M\}$ can be associated to either AP~$j-1$ or $j$. However, the first client can be associated only to the AP~$1$. Moreover, suppose that $\beta_{jj}=\beta\in (0,1]$ for all $j\in\{1,\ldots,M\}$ and the remaining $\beta_{ij}$s can be arbitrary values from the range $(0,1]$.

We can easily show that the optimal client association (i.e., the optimal solution of MILP~(\ref{eq:mini_max_primal_problem_epi})) corresponds to Figure~\ref{fig:NW1_sol}. Moreover, we have $p^\star=\beta$. Let us now focus to the LP relaxation~(\ref{eq:mini_max_primal_problem_epi_relax}) applied to the network in Figure~\ref{fig:NW1}. In this case, every client~$j$, except the client~$1$, can be associated to both AP~$j-1$ \emph{and} AP~$j$. However, we can prove by contradiction: the solution of problem~(\ref{eq:mini_max_primal_problem_epi_relax}) corresponds to the optimal association depicted in Figure~\ref{fig:NW1_sol}. Thus, we have $p^\star=p^\star_{\mathrm{relax}}$. From \emph{Proposition~\ref{prop:dualoptimal_Vs_relaxed_Primal}}, we have strong duality for MILP~(\ref{eq:mini_max_primal_problem_epi}), (i.e., $p^\star=d^\star$) and the proposed algorithm achieves $d^\star$.
\end{Ex}

\begin{Ex}\label{ex:network2}
Consider the network shown in Figure~\ref{fig:NW2}. There are $N$ APs and two types of clients connected to APs. The Type~1 clients can communicate only with a \emph{single} AP. For example each client~$j\in\{1,\ldots,J_1\}$ can communicate only with AP~$1$, and therefore, they must be assigned to AP~$1$. On the other hand, the Type~2 clients can communicate with all the APs, e.g, clients $J_N+1,\ldots,M$. As a result, Type~2 clients can be associated to any AP. Now suppose that $\beta_{ij}$ values associated with Type~1 clients are such that $\sum_{j=1}^{J_1}\beta_{1j}=\sum_{j=J_1+1}^{J_2}\beta_{2j}\cdots =\sum_{j=J_{N-1}+1}^{J_N}\beta_{Nj}=B\in\R_+$ and $\beta_{ij}$ values associated with Type~2 clients are all equal to $\beta\in(0,1]$. Moreover, suppose that the number of Type~2 clients is a multiple of $N$, i.e., $M-J_N=mN$ for some $m\in\Z_+$.

The optimal client association (i.e., the optimal solution of MILP~(\ref{eq:mini_max_primal_problem_epi})) corresponds to Figure~\ref{fig:NW2_sol}, where the Type~2 clients are equally distributed among the APs. In particular, each AP is associated with $m$ Type~2 clients and we have $p^\star=B+m\beta$.
Let us now consider solution given by the LP relaxation~(\ref{eq:mini_max_primal_problem_epi_relax}). note that there is no choice for Type~1 clients, other than associating them to the only AP they can communicate. From the symmetry, we can easily see that associating each Type~2 client~$j$ among all the APs $i=1,\ldots,N$ with equal shares is a particular solution to the LP relaxation~(\ref{eq:mini_max_primal_problem_epi_relax}), i.e., for every Type~2 client $j$, $x_{ij}=(1/N)$, $i=1,\ldots,N$. Thus, at each AP~$i$, the utilization corresponds to Type~1 clients is $B$ and the utilization corresponds to Type~2 clients becomes $(M-J_N)(1/N)$, which is identical to $m$ (recall $M-J_N=mN$). Therefore, we have $p^\star_{\mathrm{relax}}=(B+m)=p^\star$ and strong duality holds for MILP~(\ref{eq:mini_max_primal_problem_epi}).
\end{Ex}

\begin{figure}[t]
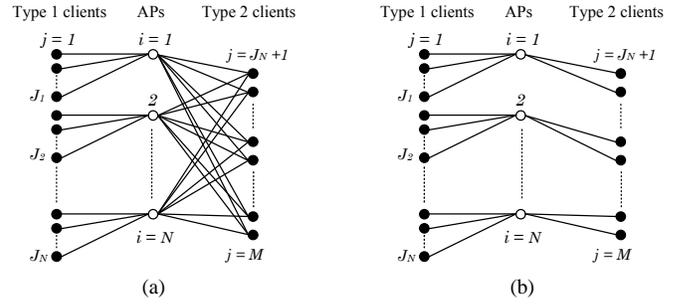

\centering
\subfigure[]
{\includegraphics[height=0.14\textheight]{zero_duality_ex2.epsi}
\label{fig:NW2}}
\goodgapp
\subfigure[]
{\includegraphics[height=0.14\textheight]{zero_duality_ex2_sol.epsi}
\label{fig:NW2_sol}}\vspace{-2mm}
\caption{Examples of network topology~2 (a)~The initial communication graph; (b)~The optimal association}
\label{fig:NW2_and_sol}
\vspace{-2.0mm}
\end{figure}

The examples above give insights into the algorithm's behavior in spacial cases. However, they can be used to build intuitive ideas of general networks. It is indeed important to analyze the proposed algorithm's behavior in general as well. In the sequel, we provide theoretical substantiation that allows us to predict the general algorithm properties in terms of the optimal duality gap and the relative duality~gap.

The following theorem formally establishes a bound on the duality gap and the asymptotic optimality of \emph{DAA}.

\begin{theorem}\label{thm:duality_gap_tends_to_zero}
The optimal duality gap of mixed integer linear program~(\ref{eq:mini_max_primal_problem_epi}) is bounded as follows:
\be\label{eq:bound_last_inside_theorem}
p^\star-d^\star \leq (N+1)(\varrho + \max_{j\in\mathcal{M}}\varrho_j) \ ,
\ee
where $\varrho=\max_{i\in\mathcal{N},{j}\in \mathcal{M}_i}\beta_{i{j}}$ and $\varrho_j=\min_{i\in\mathcal{N}_j}\beta_{ij}$. Moreover, the \emph{relative} duality gap diminishes to $0$ as $M\rightarrow\infty$.
\end{theorem}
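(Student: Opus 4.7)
The plan is to leverage Proposition~2, which equates the dual optimum $d^\star$ with the optimum $p^\star_{\mathrm{relax}}$ of the LP relaxation (\ref{eq:mini_max_primal_problem_epi_relax}); this reduces the task to bounding the integrality gap $p^\star - p^\star_{\mathrm{relax}}$. The bound (\ref{eq:bound_last_inside_theorem}) is then established by constructing an explicit feasible integer assignment, via a rounding of an optimal basic (vertex) solution $(t^\star_{\mathrm{relax}},\vec x^\star_{\mathrm{relax}})$ of the LP, whose maximum AP utilisation exceeds $t^\star_{\mathrm{relax}}$ by no more than $(N+1)(\varrho + \max_{j}\varrho_j)$.

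The heart of the argument is the structural claim that, at any vertex of (\ref{eq:mini_max_primal_problem_epi_relax}), the number of clients with a strictly fractional assignment is at most $N+1$. I would prove this by a standard support/rank argument: the bipartite graph whose edges are the components $x_{ij}\in(0,1)$ (clients on one side, APs on the other) must be essentially acyclic at a vertex, because any cycle would yield a direction of perturbation along which the equality constraints $\sum_{i}x_{ij}=1$ are preserved and, with an appropriate accompanying shift of the auxiliary variable $t$, the active load inequalities are also kept tight, contradicting the vertex property. Combining this forest-like structure of the fractional support with the observation that each fractional client contributes at least two fractional edges (since its assignment sums to~$1$) yields a count of at most $N+1$ fractional clients. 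I would then round as follows: for each fractional client $j$, pick $i_{j}=\arg\min_{i\in\mathcal{N}_j}\beta_{ij}$ and set $\tilde{x}_{i_j j}=1$, with the remaining fractional components of that client zeroed, while all integrally-assigned clients keep their LP values. Under $\tilde{\vec x}$, the load at any AP~$i$ equals the corresponding LP load (which is $\leq t^\star_{\mathrm{relax}}$) minus the fractional mass previously contributed at $i$ plus the full $\beta_{i_j j}$ terms rerouted to $i$; bounding each rerouted term by $\varrho$ in the worst case, and isolating the contribution of the argmin choice which is at most $\varrho_j\leq\max_j\varrho_j$, the maximum excess at any AP is at most $(N+1)(\varrho+\max_j\varrho_j)$.

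For the asymptotic half, I would lower-bound $p^\star$ by a pigeonhole argument: any feasible assignment has total utilisation $\sum_{j}\beta_{i_j j}\geq \sum_{j}\varrho_j$, so some AP carries load at least $(1/N)\sum_{j}\varrho_j$. Provided the admissible $\beta_{ij}$ values are uniformly bounded away from zero (which is immediate from finite SNR on every feasible AP--client pair), this lower bound on $p^\star$ grows linearly in $M$, whereas the right-hand side of (\ref{eq:bound_last_inside_theorem}) depends only on $N$ and on the extreme values of $\beta_{ij}$ and is therefore $O(1)$ in $M$; dividing gives $(p^\star-d^\star)/p^\star\to 0$ as $M\to\infty$. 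The principal obstacle is the vertex-counting step: making the ``acyclic fractional support'' argument fully rigorous requires handling degeneracy and the interplay between the tight load inequalities and the shared variable $t$, so a careful rank analysis of the active constraint matrix (or an equivalent combinatorial reduction) is needed to secure the bound of $N+1$ fractional clients on which the rounding step rests.
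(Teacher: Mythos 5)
Your proposal is sound, but it follows a genuinely different route from the paper. The paper does not touch the LP polytope's vertex structure at all: it rewrites MILP~(\ref{eq:mini_max_primal_problem_epi}) in a separable per-client form (introducing one variable $t_j$ per client with $0\leq t_j\leq t^{\mathrm{max}}+\bar{\beta}_{n_jj}x_{n_jj}$), verifies the assumptions of a Shapley--Folkman-type duality-gap estimate for separable nonconvex problems (Proposition~3, from Bertsekas), which gives $P^\star-D^\star\leq(Q+1)\max_j\rho_j$ with $Q=N$ coupling constraints and $\rho_j\leq t^{\mathrm{max}}+\bar{\beta}_{n_jj}=\varrho+\varrho_j$, and then uses Proposition~\ref{prop:dualoptimal_Vs_relaxed_Primal} only at the end to identify $D^\star=p^\star_{\mathrm{relax}}=d^\star$. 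You instead invoke Proposition~\ref{prop:dualoptimal_Vs_relaxed_Primal} up front to reduce the theorem to an integrality-gap bound, and prove that by rounding an optimal vertex of the relaxation~(\ref{eq:mini_max_primal_problem_epi_relax}); your key structural lemma (few fractional clients at a vertex) does hold and survives degeneracy if argued by rank counting rather than perturbation: the active equalities and tight box constraints of the $x$-variables have rank at most $E-E_f+F$ ($E$ edges, $E_f$ fractional edges, $F$ fractional clients), at most $N$ tight load rows involve $t$, and full rank $E+1$ forces $\sum_{j\,\mathrm{frac}}(f_j-1)\leq N-1$, hence $F\leq N-1\leq N+1$ since $f_j\geq 2$. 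What each approach buys: yours is constructive (it exhibits an explicit feasible association within the gap) and in fact yields a slightly tighter constant, roughly $(N-1)\max_j\varrho_j$, of which the stated $(N+1)(\varrho+\max_j\varrho_j)$ is a weakening (your phrasing of the per-AP excess mixing $\varrho$ and $\varrho_j$ is a bit redundant but harmless); the paper's argument avoids polyhedral combinatorics entirely by citing a general nonconvex-duality result, at the price of the looser constant and the reformulation bookkeeping. For the asymptotic claim both arguments coincide in spirit (gap bound independent of $M$, $p^\star$ growing with $M$); your pigeonhole lower bound $p^\star\geq(1/N)\sum_j\varrho_j$ with the explicit caveat that the $\varrho_j$ must not vanish is, if anything, more careful than the paper's one-line assertion that $p^\star\rightarrow\infty$, and the $O(1)$ claim for the right-hand side is justified by the standing assumption $\beta_{ij}\leq 1$.
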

\begin{proof}
See Appendix~\ref{app:Proof_Th_relative_zero_duality}.
\end{proof}
The theorem above suggests that the duality gap is always bounded by a \emph{constant} that does not depend on the number of clients in the system. Note that the bound grows like $N$, and therefore we can expect an increase in the duality gap for larger $N$. The theorem states that for larger $M$, the relative duality gap become almost zero. These are very important to get an insight of the behavior of the proposed algorithm in general networks, see \S~\ref{sec:numerical_results} for numerical examples.

\section{Numerical Examples}\label{sec:numerical_results}
In this section we present the numerical evaluation of the proposed algorithm in a multi-user multi-cell environment. We compare \emph{DAA} to: a)~random association policy, b)~RSSI-based policy, which is the association mechanism used in standards, and c)~optimal solution of the optimization problem~(\ref{eq:mini_max_primal_problem_epi}) using IBM CPLEX optimizer~\cite{CPLEX}.

\begin{figure}[t]
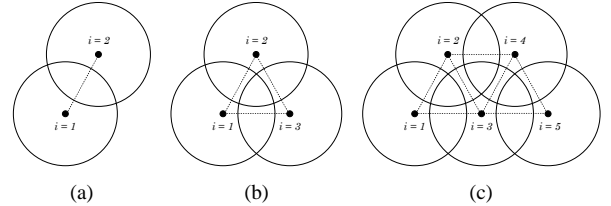

\centering
\subfigure[]
{\includegraphics[height=0.09\textheight]{top1.epsi}
\label{fig:topology1}}
\subfigure[]
{\includegraphics[height=0.09\textheight]{top2.epsi}
\label{fig:topology2}}
\subfigure[]
{\includegraphics[height=0.09\textheight]{top3.epsi}
\label{fig:topology3}}\vspace{-2mm}
\caption{Example simulation topologies (a)~2APs; (b)~3APs; (c)~5APs}
\label{fig:topology_all}
\vspace{-0.4cm}
\end{figure}
We define the SNR operating point at a distance $d$ [distance units] form any AP as
\begin{equation} \label{eq:SNR}\nonumber
\texttt{SNR}(d) = \left\{ \begin{array}{ll}
  \displaystyle{{P}_{0}}\lambda^2/(16\pi^2N_0W) & d\leq d_0\\
 \displaystyle{{P}_{0}}\lambda^2/(16\pi^2N_0W)\cdot\left({d}/{d_0}\right)^{-\eta} &  \textrm{otherwise}\ .
   \end{array} \right.
\end{equation}
Circular cells as depicted in~Figure~\ref{fig:system_example} are considered, where the radius of each cell $r$ is chosen such that $\texttt{SNR}(r)=10$~dB. The APs are located such that the distance between any consecutive APs is $D=1.1r$. For example, Figure~\ref{fig:topology1} shows the case for $N=2$ APs, Figure~\ref{fig:topology2} shows the case for $N=3$, and Figure~\ref{fig:topology3} shows the case for $N=5$. The clients are uniformly distributed among the circular cells and the potential AP-client association (i.e., $\mathcal{M}_i$ and $\mathcal{N}_j$) is found as pointed out in Figure~\ref{fig:system_example}.

We set $\lambda=5\:$mm, $N_0=-134\:$dBm/MHz, $W=1200\:$MHz, and $d_0=1\:$m, see~(\ref{eq:AWGN}) and~(\ref{eq:gain_coefficient}). Moreover, for all $j\in\mathcal{M}$, we set $I_j=0$ and for all $i\in\mathcal{N}, j\in\mathcal{M}_i$, we set $P_{ij}=P_0=0.1\:$mW and $G^{\mathrm{Tx}}_{ij}=G^{\mathrm{Rx}}_{ij}=1$. In order to check the average performance of the algorithms, we consider $\bar T=1000$ time slots, where the fading coefficients $\alpha_{ij}$ for all $i\in\mathcal{N}, \ j\in\mathcal{M}_i$ and the demanded data rates $Q_j$ for all $j\in\mathcal{M}$ are constant during each time slot $T\in\{1,\ldots,\bar{T}\}$ and independently change from one slot to another. In particular, the exponential random variables $\alpha_{ij}$ with unit mean are independent and identically distributed over the time slots. Moreover, we assume that $Q_j$ are uniformly distributed on $[0,400]$Mbits/s and independent and identically distributed over the time slots.

\begin{figure}[t]
\centering
\subfigure[]
{\includegraphics[height=0.22\textheight]{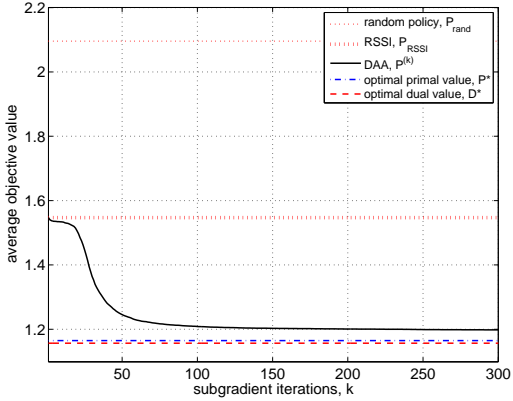}
\label{fig:Bound_illustration0}}
\goodgap
\subfigure[]
{\includegraphics[height=0.22\textheight]{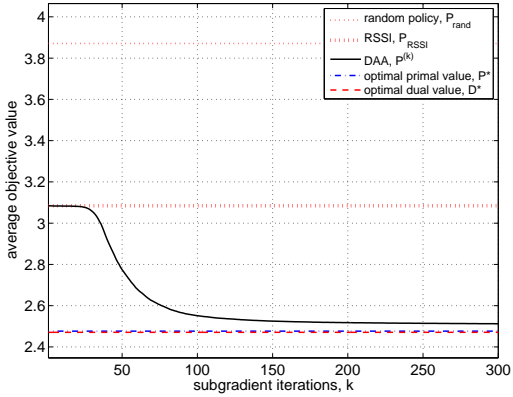}
\label{fig:convergence_illustration0}}\vspace{-2mm}
\caption{Influence of the number of clients on the convergence (a)~Average objective value $\texttt{P}^{(k)}$ vs. iterations $k$, $5$ APs, $100$ clients; (b)~Average objective value $\texttt{P}^{(k)}$ vs. iterations $k$, $5$ APs, $200$ clients}
\label{fig:Bound_illustration_all0}
\vspace{-0.4cm}
\end{figure}

\begin{figure}[t]
\centering
\subfigure[]
{\includegraphics[height=0.22\textheight]{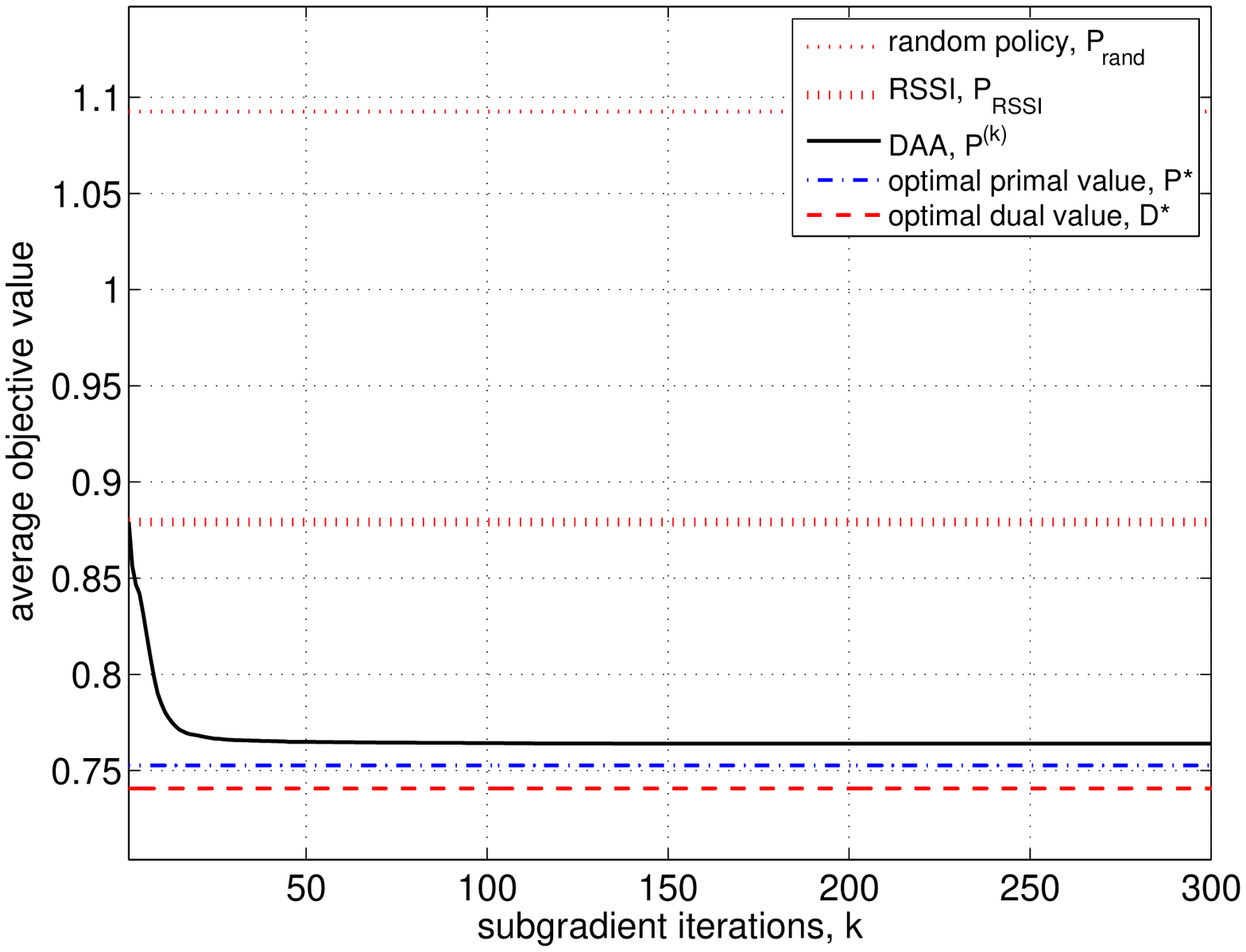}
\label{fig:Bound_illustration1}}
\goodgap
\subfigure[]
{\includegraphics[height=0.22\textheight]{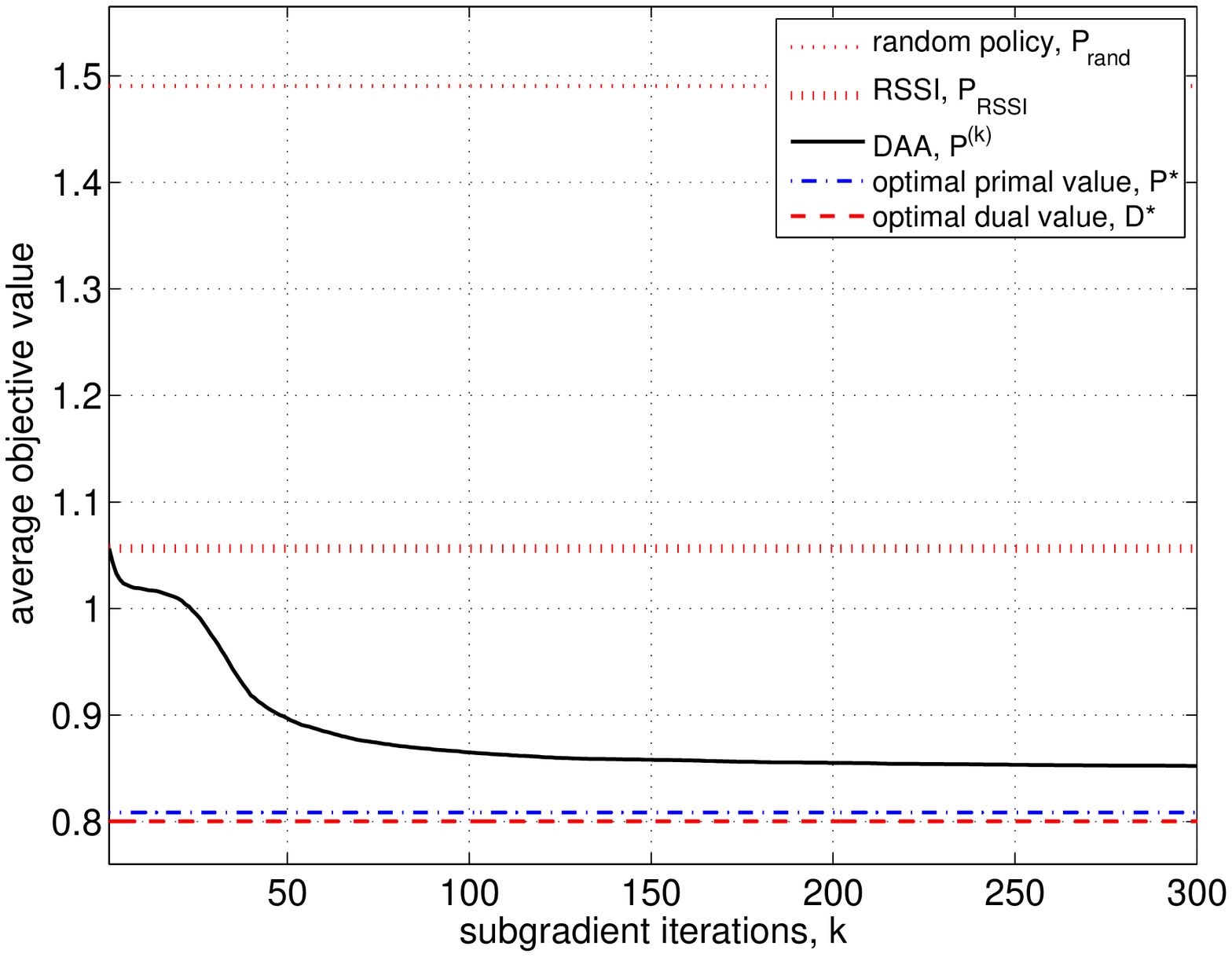}
\label{fig:convergence_illustration1}}\vspace{-2mm}
\caption{Influence of the number of APs on the convergence. (a)~Average objective value $\texttt{P}^{(k)}$ vs. iterations $k$, $3$ APs, $30$ clients; (b)~Average objective value $\texttt{P}^{(k)}$ vs. iterations $k$, $10$ APs, $100$ clients}
\label{fig:Bound_illustration_all1}
\vspace{-0.3cm}
\end{figure}

\begin{figure}[t]
\centering
\subfigure[]
{\includegraphics[height=0.22\textheight]{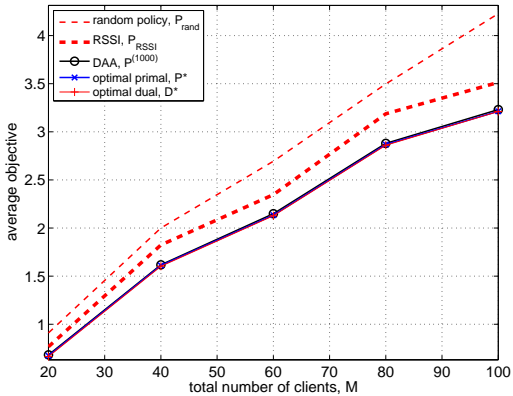}
\label{fig:Bound_illustration2}}
\goodgap
\subfigure[]
{\includegraphics[height=0.22\textheight]{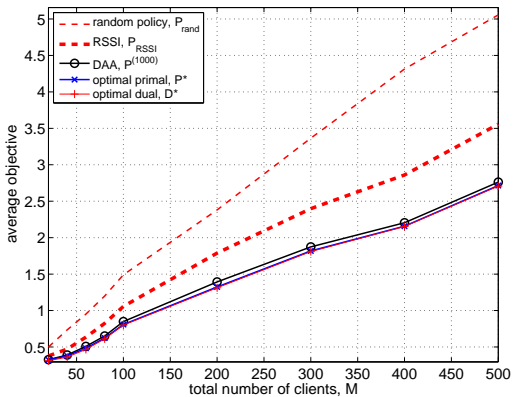}
\label{fig:convergence_illustration2}}\vspace{-2mm}
\caption{Average objective values $\texttt{P}_{\texttt{rand}}, \texttt{P}_{\texttt{RSSI}}, \texttt{P}^\star, \texttt{P}^{(1000)}$, and average dual optimal $\texttt{D}^\star$  vs. the number of users $M$. (a)~$2$ APs; (b)~$10$ APs}
\label{fig:Bound_illustration_all2}
\vspace{-0.4cm}
\end{figure}

\begin{figure}[t]
\centering
\subfigure[]
{\includegraphics[height=0.22\textheight]{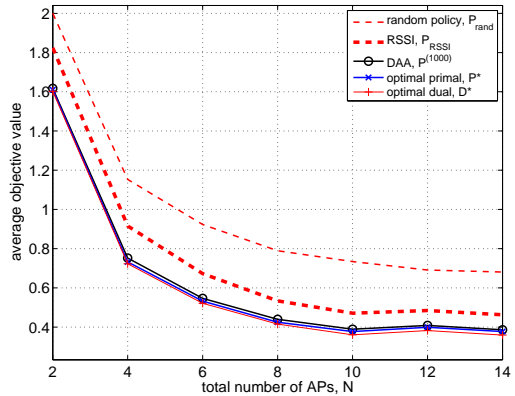}
\label{fig:Bound_illustration3}}
\goodgap
\subfigure[]
{\includegraphics[height=0.22\textheight]{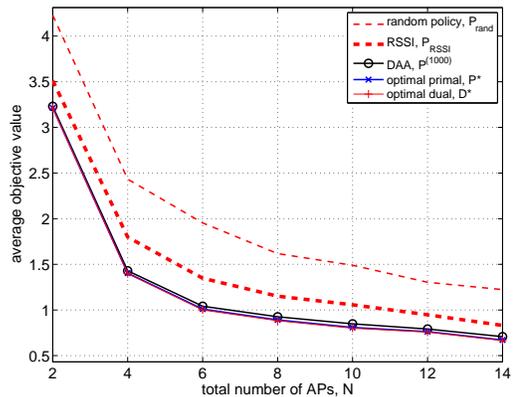}
\label{fig:convergence_illustration3}}\vspace{-2mm}
\caption{Average objective value $\texttt{P}_{\texttt{rand}}, \texttt{P}_{\texttt{RSSI}}, \texttt{P}^\star, \texttt{P}^{(1000)}$, and average dual optimal $\texttt{D}^\star$ vs. the number of APs $N$. (a)~$40$ users; (b)~$100$ users}
\label{fig:Bound_illustration_all3}
\vspace{-0.4cm}
\end{figure}

To see the average convergence behavior of the proposed algorithm, we consider the average \emph{primal} objective value of problem~(\ref{eq:mini_max_primal_problem_epi}) obtained by \emph{DAA}. In particular, the average primal objective value from \emph{DAA} after $k$ subgradient iterations is defined as $\texttt{P}^{(k)} = (1/\bar T)\sum_{T=1}^{\bar T}p^{(k)}_{\mathrm{best}}(T)$, where $p^{(k)}_{\mathrm{best}}(T)$ is the best \emph{primal feasible} objective value of problem~(\ref{eq:mini_max_primal_problem_epi}) after $k$ iterations at time slot $T$, see~(\ref{eq:best_primal_objective_afrer_k_itr}).\footnote{Due to the nonconvexity of the original problem~(\ref{eq:mini_max_primal_problem}), $p^{(k)}_{\mathrm{best}}(T)$ is not necessarily achieves the optimal value even when $k\rightarrow\infty$.} The average objective values from benchmark algorithms, random association policy, RSSI policy, and the optimal policy are defined in a similar manner, and are denoted by $\texttt{P}_{\texttt{rand}}$, $\texttt{P}_{\texttt{RSSI}}$, and $\texttt{P}^\star$, respectively.\footnote{Since the benchmark algorithms do not depend on subgradient iteration~$k$, like $\texttt{P}^{(k)}$, there is no superscript $(k)$ required for $\texttt{P}_{\texttt{rand}}$, $\texttt{P}_{\texttt{RSSI}}$, and $\texttt{P}^\star$.} Moreover, the average dual optimal value obtained by \emph{DAA} is $\texttt{D}^\star= (1/\bar T)\sum_{T=1}^{\bar T}d^\star(T)$, where $d^\star(T)$ is the optimal objective value of dual problem~(\ref{eq:dual_problem}) at time slot~$T$.

Figure~\ref{fig:Bound_illustration_all0} shows $\texttt{P}^{(k)}$ versus subgradient iteration~$k$, for the cases where $N=5$, $M=100$ (Figure~\ref{fig:Bound_illustration0}) and $N=5$, $M=200$ (Figure~\ref{fig:convergence_illustration0}). Results show that there is a noticeable effect of varying $M$ (the number of clients in the system) on convergence time. In particular, the convergence is faster for smaller~$M$. This observation is consistent with our analytical study presented in~\S~\ref{subsec:Convergence}. Proposed \emph{DAA} clearly outperforms the RSSI policy used in 802.11, 802.15.3c and 802.11ad, as well as the random policy. For example, \emph{DAA}, yields a performance improvement of about $20\%$ compared to RSSI policy in both considered cases. The gap between $\texttt{P}^\star$ and $\texttt{P}^{(k)}$, even after relatively larger the number of subgradient iterations (e.g., $k=300$) is indeed expected due to the nonconvexity of the original problem~(\ref{eq:mini_max_primal_problem}), see \S~\ref{subsec:RecoverPrimal}. Nevertheless, average dual optimal value $\texttt{D}^\star$ from \emph{DAA} is almost equal to the optimal $\texttt{P}^\star$.

Figure~\ref{fig:Bound_illustration_all1} shows $\texttt{P}^{(k)}$ versus subgradient iteration~$k$, for the cases where $N=3$, $M=30$ (Figure~\ref{fig:Bound_illustration1}) and $N=10$, $M=100$ (Figure~\ref{fig:convergence_illustration1}). Here, the clients density or the number of clients per AP is roughly the same (i.e., $10$). Results show that there is a clear effect of varying $N$ (while keeping client density fixed), on convergence. In particular, the convergence is faster for smaller~$N$. This observation is inline with our analytical study presented in~\S~\ref{subsec:Convergence}. The performance of other benchmark algorithms are very similar to those in Figure~\ref{fig:Bound_illustration_all0}.



Figure~\ref{fig:Bound_illustration_all2} shows the average objective from \emph{DAA} after $K=1000$ subgradient iterations, $\texttt{P}^{(K)}$ versus the number of clients $M$ for the cases $N=2$ (Figure~\ref{fig:Bound_illustration2}) and $N=10$ (Figure~\ref{fig:convergence_illustration2}). Plots for the benchmark algorithms are also depicted. Results show that the average objective values associated with each algorithm increase as $M$ increases. This is intuitively expected because APs become more loaded as the number of clients grows. Results further show that $\texttt{P}^{(K)}$ that came from \emph{DAA} is very close to the optimal $\texttt{P}^\star$ in both cases and the performance gap is not sensitive to changes in $M$. Note that the performance of the random and the RSSI policies are substantially low and their performance degradation becomes even noticeable for larger~$M$, see Figure~\ref{fig:convergence_illustration2}. As expected $\texttt{D}^\star$ provides a global lower bound on the performance, and is hardly distinguishable from $\texttt{P}^\star$.

Figure~\ref{fig:Bound_illustration_all3} shows the average objective value versus the number of APs $N$ for the cases where $M=40$ (Figure~\ref{fig:Bound_illustration3}) and $M=100$ (Figure~\ref{fig:convergence_illustration3}). The performance ranking of the algorithms is very similar to Figure~\ref{fig:Bound_illustration_all2}. Results show that the average objective values decrease as $M$ increases. This is intuitively explained by noting that, the larger the $N$ is, the smaller the client density is, and therefore the smaller the average objective values of each algorithm becomes. Results again shows that \emph{DAA} performs close to the optimal and outperforms the random and RSSI policies noticeably.


\begin{figure}[t]
\centering
\subfigure[]
{\includegraphics[height=0.22\textheight]{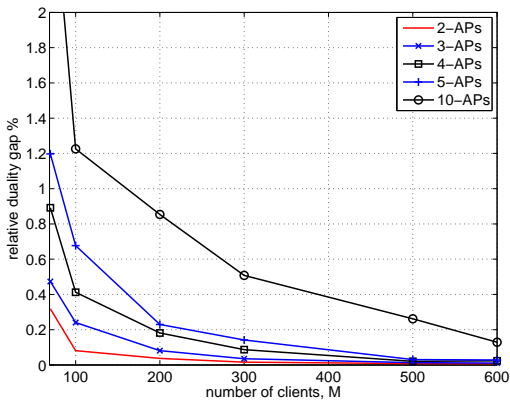}
\label{fig:Bound_illustration4}}
\goodgap
\subfigure[]
{\includegraphics[height=0.22\textheight]{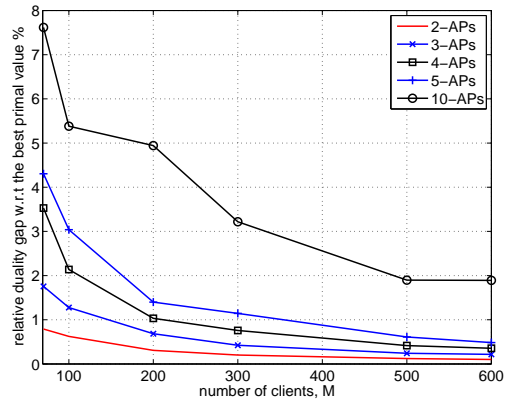}
\label{fig:convergence_illustration4}}\vspace{-2mm}
\caption{(a)~Average relative duality gap $\texttt{Ave-RDG}$ vs. the number of clients $M$; (b)~Average best achieved relative duality gap $\texttt{Ave-RDG-best-achieved}$ vs. the number of clients $M$}
\label{fig:Bound_illustration_all4}
\vspace{-0.4cm}
\end{figure}
\begin{figure}[t]
\centering
\subfigure[]
{\includegraphics[height=0.22\textheight]{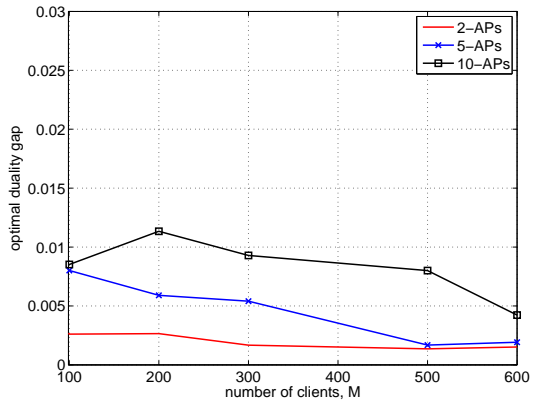}
\label{fig:Bound_illustration5}}
\goodgap
\subfigure[]
{\includegraphics[height=0.22\textheight]{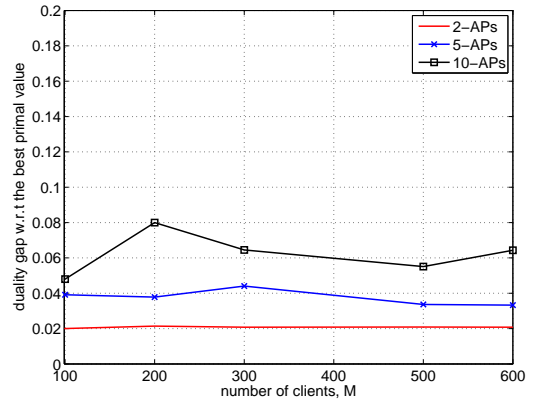}
\label{fig:convergence_illustration5}}\vspace{-2mm}
\caption{(a)~Average optimal duality gap $\texttt{Ave-DG}$ vs. the number of clients $M$; (b)~Best achieved average duality gap $\texttt{Ave-DG-best-achieved}$ vs. the number of clients $M$}
\label{fig:Bound_illustration_all5}
\vspace{-0.4cm}
\end{figure}

To see the effect of the increasing number of clients on the \emph{relative duality gap}~(see~\S~\ref{subsec:duality_gap}), we define the metric \emph{average relative duality gap}, $\texttt{Ave-RDG}$. In particular, $\texttt{Ave-RDG}= (1/\bar T)\sum_{T=1}^{\bar T}(p^\star(T)-d^\star(T))/p^\star(T)$, where $p^\star(T)$ is the optimal value of primal problem~(\ref{eq:mini_max_primal_problem_epi}) and $d^\star(T)$ is the optimal value of dual problem~(\ref{eq:dual_problem}), at time slot~$T$. Moreover, we denote by $\texttt{Ave-RDG-best-achieved}$, a related metric defined very similar to $\texttt{Ave-RDG}$, except that $p^\star(T)$ is replaced with $p^{(K)}_{\mathrm{best}}(T)$, i.e., the best \emph{primal feasible} objective value achieved from \emph{DAA} after $K$ iterations at time slot $T$.

Figure~\ref{fig:Bound_illustration4} measures the percentage $\texttt{Ave-RDG}$ versus $M$ for different $N$s. For all considered $\texttt{Ave-RDG}$, $N$ approaches to zero as $M$ increases. This is consistent with our analytical results established by \emph{Theorem~\ref{thm:duality_gap_tends_to_zero}}, see~\S~\ref{subsec:duality_gap}. Figure~\ref{fig:convergence_illustration4} shows that the plots of percentage $\texttt{Ave-RDG-best-achieved}$ versus $M$ are very
similar to those in Figure~\ref{fig:Bound_illustration4}. This behavior is not surprising, because the performance of \emph{DAA} is very close to the performance of the optimal method, see Figure~\ref{fig:Bound_illustration_all2} and Figure~\ref{fig:Bound_illustration_all3}.

Figure~\ref{fig:Bound_illustration_all5} depicts the dependence of the average duality gap on $M$. In particular, we define the \emph{optimal average duality gap} as $\texttt{P}^\star-\texttt{D}^\star$ and we plot \texttt{Ave-DG} versus $M$ are shown in Figure~\ref{fig:Bound_illustration5}. Moreover, we define the \emph{best achieved average duality gap} \texttt{Ave-DG-best-achieved} as $\texttt{P}^{(K)}-\texttt{D}^\star$. The corresponding plots are shown in Figures~\ref{fig:convergence_illustration5}. In both cases, results show that there is no apparent effect of the varying $M$ on the duality gap. Nevertheless, as we discussed in~\S~\ref{subsec:duality_gap} (see (\ref{eq:bound_last_inside_theorem})), the average duality gap grows when $N$~increases.

To examine the \emph{fairness} of the final client association among the APs, we consider the well known Jain's fairness index~\cite{Walrand00} as the fairness metric. We denote by $J^{(k)}(T)$ the fairness level resulted by the proposed \emph{DAA} at time slot $T$ after $k$ iterations. In particular, we define $J^{(k)}(T)= \big(\sum_{i\in\mathcal{N}}Y^{(k)}_i(T)\big)^2/(N\sum_{i\in\mathcal{N}}Y^{(k)}_i(T)^2)$, where $Y^{(k)}_i(T)=\sum_{j\in\mathcal{M}_i}\beta_{ij}x^{(k)}_{ij}(T)$ with $x^{(k)}_{ij}(T)$ being the best feasible solution resulted from \emph{DAA} at time slot $T$ and after $k$ iterations. The average fairness index $J^{(k)}$ resulted from \emph{DAA} after $k$ iterations is simply defined as $J^{(k)}=(1/\bar T)\sum_{T=1}^{\bar T}J^{(k)}(T)$. The average fairness indexes resulted from the benchmark algorithms, the random association policy, the RSSI policy, and the optimal policy are defined in a similar manner, and are denoted by $\texttt{J}_{\texttt{rand}}$, $\texttt{J}_{\texttt{RSSI}}$, and $\texttt{J}^\star$, respectively.\footnote{Benchmark algorithms do not depend on subgradient iteration~$k$. Therefore, like $\texttt{J}^{(k)}$, there is no superscript $(k)$ required for $\texttt{J}_{\texttt{rand}}$, $\texttt{J}_{\texttt{RSSI}}$, and~$\texttt{J}^\star$.}

Figure~\ref{fig:Bound_illustration_all6} depicts $J^{(k)}$ versus $k$ compared to the benchmark fairness indexes $\texttt{J}_{\texttt{rand}}$, $\texttt{J}_{\texttt{RSSI}}$, and $\texttt{J}^\star$ for the case where $N=5$ and $M=100$. Note that the fairness index ranges from $1/N$ (worst performance) to 1 (best performance). As expected, the optimal association provides the best performance. Results show that within a few hundreds of iterations, \emph{DAA} achieves a fairness level very close to the optimal. Results further show that \emph{DAA} significantly outperforms the random and the RSSI policies.

In order to provide a statistical description of the speed of
the proposed algorithm, we consider empirical the cumulative distribution function~(CDF) plots. Specifically, for each time slot $T\in\{1,\ldots,\bar{T}\}$, we store the total CPU time required for \emph{DAA} to find $p^{(K)}_{\mathrm{best}}(T)$. For comparison, we use the total CPU time required to find the \emph{optimal} value $p^\star(T)$. Figure~\ref{fig:Bound_illustration8} shows the empirical CDF plots of the number of iterations for $M=100,200,300$, with $N=10$. In the case of \emph{DAA}, the effect of changing the problem size by increasing $M$ on the CDF plots are almost indistinguishable. However, in the case of optimal method, there is a prominent increase in the time required to compute~$p^\star(T)$. Figure~\ref{fig:convergence_illustration8} depicts the average time required by \emph{DAA} and the optimal method versus $M$. Results show that the average time required by \emph{DAA} to find possibly a suboptimal solution is not sensitive to the variation of $M$, and is almost zero. However, the average time required by the optimal method to find the optimal solution grows approximately exponentially with $M$. This is certainly expected because problem~(\ref{eq:mini_max_primal_problem_epi}) is combinatorial, and therefore the worst-case complexity of the global method (CPLEX) grows exponentially with the problem size~\cite[\S~1.4.2]{Boyd-Vandenberghe-04}. Thus, there is naturally a tradeoff between the optimality and the efficiency of the algorithms. Nevertheless, Figures~\ref{fig:Bound_illustration_all2}, \ref{fig:Bound_illustration_all3} and \ref{fig:Bound_illustration_all8}, and the asymptotic results in Figure~\ref{fig:Bound_illustration_all4} indicate that \emph{DAA} yields a good tradeoff between the
optimality and the efficiency, which are favorable for practical implementation.

\begin{figure}[t]
\centering
{\includegraphics[height=0.22\textheight]{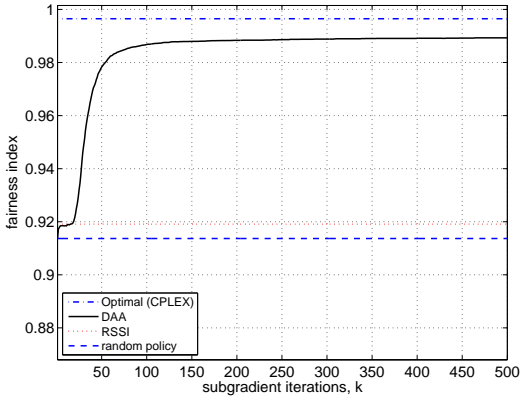}}\vspace{-2mm}
\caption{Average fairness index vs. the iterations $k$, $5$ APs, $100$ clients}
\label{fig:Bound_illustration_all6}
\vspace{-0.4cm}
\end{figure}

\section{Conclusions}\label{sec:conclusions}
In this paper we considered the problem of optimizing the allocation of the clients to the available APs in 60 GHz wireless access networks. The objective in our problem formulation was to \emph{minimize the maximum AP utilization} in the network. The optimization problem was combinatorial. Thus, we proposed a distributed association algorithm (\emph{DAA}), based on Langragian duality theory and subgradient methods. \emph{DAA} is fully compliant with the existing WiFi and 60 GHz protocols/standards and it can be easily implemented on top of the MAC mechanisms that they define. We studied the behavior of \emph{DAA} through theoretical analysis, where we proved its asymptotic optimality properties. Moreover, we presented a numerical analysis, where \emph{DAA} was compared to other association policies in realistic scenarios. We tested convergence, scalability, time efficiency, and fairness. Our results indicate that the proposed solution could be well applied in the forthcoming 60 GHz wireless access networks. 


\begin{figure}[t]
\centering
\subfigure[]
{\includegraphics[height=0.228\textheight]{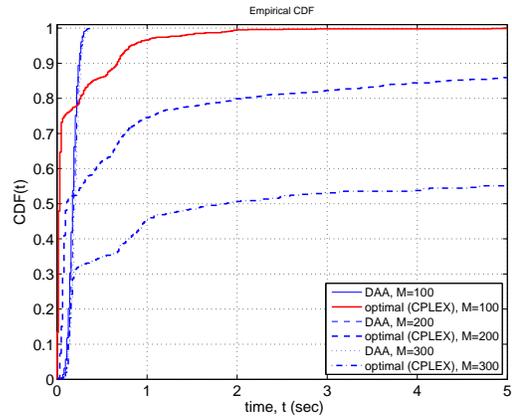}
\label{fig:Bound_illustration8}}
\goodgap
\subfigure[]
{\includegraphics[height=0.22\textheight]{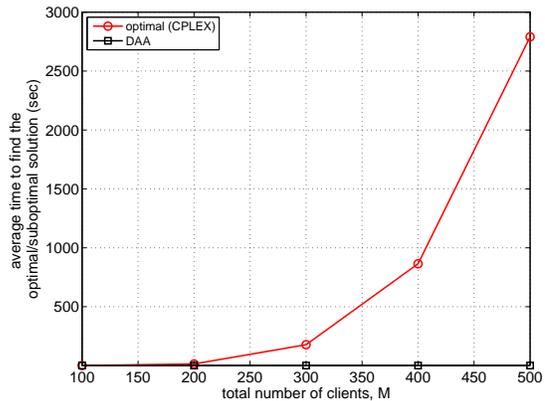}
\label{fig:convergence_illustration8}}\vspace{-2mm}
\caption{(a)~Empirical CDF plots of Total CPU time, $N=10$; (b)~Average time to find optimal/suboptimal solution, $N=10$}
\label{fig:Bound_illustration_all8}
\vspace{-0.4cm}
\end{figure}

\ifCLASSOPTIONcaptionsoff
  \newpage
\fi

\appendices


\section{Proof of Theorem~\ref{thm:duality_gap_tends_to_zero} }\label{app:Proof_Th_relative_zero_duality}
The proof is based on a proposition from~\cite{Bertsekas-96}, which we restate here for clarity and for simplifying the presentation.
\begin{prop}\label{prop:Bertsekas_estimate_of_duality_gap}
Consider the following possibly \emph{nonconvex} problem:
\begin{IEEEeqnarray}{lcl}\label{eq:Estimate_Duality_Problem}
\mbox{minimize} & \ \ & \textstyle\sum_{j\in\mathcal{J}}f_j({\vec y}_j)\IEEEyessubnumber\label{eq:Estimate_Duality_Problem1}\\
\mbox{subject to} & \ \  & {\vec y}_j\in\mathcal{Y}_j \ , \ j\in\mathcal{J} \IEEEyessubnumber\label{eq:Estimate_Duality_Problem2}\\
& \ \ & \textstyle\sum_{j\in\mathcal{J}}{\vec h}_j({\vec y}_j)\leq {\vec b} \IEEEyessubnumber\label{eq:Estimate_Duality_Problem3}  \ ,
\end{IEEEeqnarray}
where the variables are ${\vec y}_j\in\R^{y_j}$. The problem parameters $\mathcal{J}=\{1,\ldots,J\}$, ${\vec b}$ is a given vector in $\R^Q$, $\mathcal{Y}_j$ is a subset of $\R^{y_j}$, and $f_j:\conv{(\mathcal{Y}_j)}\rightarrow\R$ and ${\vec h}_j:\conv{(\mathcal{Y}_j)}\rightarrow\R^Q$ are functions defined on the convex hull of $\mathcal{Y}_j$. The following assumptions hold for the primal problem~(\ref{eq:Estimate_Duality_Problem}):

\noindent\textbf{Assumption 1:} There exist at least one feasible solution of problem~(\ref{eq:Estimate_Duality_Problem}).

\noindent\textbf{Assumption 2}: For each $j$, the subset of $\R^{y_j+Q+1}$
    \be
    \{({\vec y}_j,{\vec h}_j({\vec y}_j),f_j({\vec y}_j)) \ | \ {\vec y}_j\in\mathcal{Y}_j\}
    \ee
    is compact.

\noindent\textbf{Assumption 3:} For each $j$, given any vector $\tilde{\vec y}$ in $\conv{(\mathcal{Y}_j)}$, there exists ${\vec y}\in\mathcal{Y}_j$ such that ${\vec h}_j({\vec y})\leq \tilde{\vec h}_j(\tilde{\vec y})$, where $\tilde{\vec h}_j:\conv{(\mathcal{Y}_j)}\rightarrow\R^Q$ is the \emph{convexified} version of ${\vec h}_j$ on $\conv{(\mathcal{Y}_j)}$ and the notation ``$\leq$" here means the componentwise inequality. In particular, for all $\tilde{\vec y}\in\conv{(\mathcal{Y}_j)}$ 

\setcounter{equation}{33}
\begin{table*}[b!]
\normalsize
\vspace{-3mm}
\begin{tabular}{p{17.7cm}} \hline  \\ \end{tabular}
\vspace{-9mm}
\begin{subequations}\label{eq:d_star_equals_D_star}
\begin{align}\label{eq:d_star_equals_D_star1}
\hh\hh\tilde{\vec h}_j(\tilde{\vec y}) & = \mathop{\textstyle{\sum}}_{k}\alpha^k(\bar{\beta}_{1j}x^k_{1j}-t^k_j,\ldots,\bar{\beta}_{n_jj}x^k_{nj}-t^k_j,\ldots,\bar{\beta}_{Nj}x^k_{Nj}-t^k_j)\\ \label{eq:d_star_equals_D_star2}
& =\mathop{\textstyle{\sum}}_{k}(\bar{\beta}_{1j}\alpha^k x^k_{1j}-\alpha^k t^k_j,\ldots,\bar{\beta}_{n_jj}\alpha^k x^k_{nj}-\alpha^k t^k_j,\ldots,
\bar{\beta}_{Nj}\alpha^k x^k_{Nj}-\alpha^k t^k_j)\\ \label{eq:d_star_equals_D_star3}
& =\left(\bar{\beta}_{1j}\mathop{\textstyle{\sum}}_{k}(\alpha^k x^k_{1j})-\mathop{\textstyle{\sum}}_{k}(\alpha^k t^k_j),\ldots,\bar{\beta}_{n_jj}\mathop{\textstyle{\sum}}_{k}(\alpha^k x^k_{n_jj})-\mathop{\textstyle{\sum}}_{k}(\alpha^k t^k_j),\ldots,\bar{\beta}_{Nj}\mathop{\textstyle{\sum}}_{k}(\alpha^k x^k_{Nj})-\mathop{\textstyle{\sum}}_{k}(\alpha^k t^k_j)\right)\\ \label{eq:d_star_equals_D_star4}
& \geq\left(\hh\bar{\beta}_{1j}\mathop{\textstyle{\sum}}_{k}(\alpha^k x^k_{1j}){-}\hh\mathop{\textstyle{\sum}}_{k}(\alpha^k t^k_j),\ldots,\bar{\beta}_{n_jj}\mathop{\textstyle{\sum}}_{k}(\alpha^k  x^k_{n_jj}){-}\hh\mathop{\textstyle{\sum}}_{k}\big(\alpha^k (t^\mathrm{max}{+}\bar{\beta}_{n_jj}{x^k_{n_jj}})\big),\ldots,\bar{\beta}_{Nj}\hh\mathop{\textstyle{\sum}}_{k}(\alpha^k x^k_{Nj}){-}\mathop{\textstyle{\sum}}_{k}(\alpha^k t^k_j)\right)\\ \label{eq:d_star_equals_D_star5}
&\geq\left(\bar{\beta}_{1j}0-t^{\mathrm{max}}_c,\ldots,-t^{\mathrm{max}},\ldots,\bar{\beta}_{Nj} 0-t^{\mathrm{max}}_c\right)\\ \label{eq:d_star_equals_D_star6}
&=\left(-t^{\mathrm{max}}_c,\ldots,-t^{\mathrm{max}},\ldots,-t^{\mathrm{max}}_c\right)\\ \label{eq:d_star_equals_D_star7}
& = {\vec h}_j({\vec y}) \ ,
\end{align}
\end{subequations}
\end{table*}
\setcounter{equation}{27}

\begin{multline}\label{eq:h_tilde}
\tilde{\vec h}_j(\tilde{\vec y})= \textstyle\inf\bigg\{ \sum_{k=1}^{y_j+1}\alpha^k{\vec h}_j({\vec y}^k)  \left| \ \tilde{\vec y}=\sum_{k=1}^{y_j+1}\alpha^k{\vec y}^k,\right. \\
 \textstyle{\vec y}^k\in\mathcal{Y}_j, \sum_{k=1}^{y_j+1}\alpha^k=1, \alpha^k\geq 0   \bigg\} \ .
\end{multline}
Moreover, consider the dual problem of~(\ref{eq:Estimate_Duality_Problem}), i.e.,
\begin{equation} \label{eq:Estimate_Duality_Problem_dual}\nonumber
\begin{array}{ll}
\hspace{-2mm}\mbox{maxmize} & \hspace{-0mm}d(\boldsymbol\nu){=}\displaystyle\inf_{\substack{{\vec y}_j\in\mathcal{Y}_j\\j\in\mathcal{J}}}\textstyle\left\{\displaystyle\mathop{\textstyle{\sum}}_{j\in\mathcal{J}}  [f_j({\vec y}_j) + \boldsymbol\nu\tran{\vec h}_j({\vec y}_j)] -\boldsymbol\nu\tran{\vec b} \right\} \\
\hspace{-2mm}\mbox{subject to} & \hspace{-0mm}\boldsymbol\nu\geq{\vec 0} \ ,
\end{array}
\end{equation}
with variables $\boldsymbol\nu=(\nu_1,\ldots,\nu_Q)\in\R^Q$. Then we have
\be\label{eq:duality_gap_bertsekas}
P^\star - D^\star \leq (Q+1)\max_{j\in\mathcal{J}}{\rho_j} \ ,
\ee
where $P^\star$ denotes the optimal value of problem~(\ref{eq:Estimate_Duality_Problem}), $D^\star$ denotes the optimal value of the dual problem~(\ref{eq:Estimate_Duality_Problem_dual}), and $\rho_j$ is a nonnegative scalar such that
\be\label{eq:duality_gap_bertsekas_rho}
\rho_j \leq \sup_{{\vec y}_i\in\mathcal{Y}_j}f_j({\vec y}_j)- \inf_{{\vec y}_i\in\mathcal{Y}_j}f_j({\vec y}_j) \ .
\ee
\end{prop}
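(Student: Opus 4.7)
The plan is to attack this via the classical ``convexify and approximate'' strategy rooted in the Shapley--Folkman theorem, following the line of reasoning used by Bertsekas. The idea is to insert a convex companion problem between the primal and the dual, show that the dual optimal value already equals the optimal value of the convexified problem, and then bound the residual gap between the original nonconvex primal and its convexification using the dimension of the coupling constraints.

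Concretely, I would first define the convexified problem obtained by replacing each $\mathcal{Y}_j$ with $\conv(\mathcal{Y}_j)$, each ${\vec h}_j$ with $\tilde{\vec h}_j$ as in~(\ref{eq:h_tilde}), and each $f_j$ with its analogous convex envelope $\tilde f_j$ on $\conv(\mathcal{Y}_j)$. Denote its optimal value by $\tilde P^\star$. The first key step is to verify that $D^\star=\tilde P^\star$. For this, note that for every fixed $\boldsymbol\nu\geq \vec 0$, the inner infimum defining $d(\boldsymbol\nu)$ satisfies $\inf_{{\vec y}_j\in\mathcal{Y}_j}[f_j({\vec y}_j)+\boldsymbol\nu\tran{\vec h}_j({\vec y}_j)] = \inf_{{\vec y}_j\in\conv(\mathcal{Y}_j)}[\tilde f_j({\vec y}_j)+\boldsymbol\nu\tran\tilde{\vec h}_j({\vec y}_j)]$, because the infimum of any function over a set equals the infimum of its convex envelope over the convex hull. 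Hence the dual function is identical for the two problems. Combined with Assumption 2 (compactness, which gives attainment and closedness of the perturbation function) and Assumption 3 (which supplies the feasibility condition needed for strong duality of the convex companion), the classical convex duality theory yields $D^\star = \tilde D^\star = \tilde P^\star$.

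The second key step is to bound $P^\star - \tilde P^\star$ using the Shapley--Folkman theorem. Let $(\tilde{\vec y}_j^\star)_{j\in\mathcal{J}}$ be an optimal solution of the convexified problem, and consider the augmented sets $S_j=\{({\vec h}_j({\vec y}_j),f_j({\vec y}_j))\mid{\vec y}_j\in\mathcal{Y}_j\}\subset\R^{Q+1}$. The augmented optimal vector $\sum_j(\tilde{\vec h}_j(\tilde{\vec y}_j^\star),\tilde f_j(\tilde{\vec y}_j^\star))$ lies in $\conv\bigl(\sum_j S_j\bigr)$, so by Shapley--Folkman in $\R^{Q+1}$ it admits a decomposition into points of $\conv(S_j)$ for which at most $Q+1$ indices $j$ are genuinely ``mixed'' (convex combinations of strictly more than one element of $S_j$), while the remaining indices correspond to points already in $S_j$, i.e.\ to actual $\hat{\vec y}_j\in\mathcal{Y}_j$. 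For the mixed indices, invoke Assumption 3 to pick $\hat{\vec y}_j\in\mathcal{Y}_j$ with ${\vec h}_j(\hat{\vec y}_j)\leq \tilde{\vec h}_j(\tilde{\vec y}_j^\star)$; this preserves the coupling constraint~(\ref{eq:Estimate_Duality_Problem3}) and hence yields a fully feasible point for the original problem~(\ref{eq:Estimate_Duality_Problem}).

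Finally, I would bound the cost. The non-mixed indices contribute exactly $f_j(\hat{\vec y}_j)=\tilde f_j(\tilde{\vec y}_j^\star)$. Each of the at-most-$(Q+1)$ mixed indices contributes an extra cost $f_j(\hat{\vec y}_j)-\tilde f_j(\tilde{\vec y}_j^\star)$, which is nonnegative (since $\tilde f_j\leq f_j$ on $\mathcal{Y}_j$) and bounded by $\sup_{{\vec y}_j\in\mathcal{Y}_j}f_j({\vec y}_j)-\inf_{{\vec y}_j\in\mathcal{Y}_j}f_j({\vec y}_j)$. Calling this bound $\rho_j$ (exactly the $\rho_j$ of the statement), summing over the at-most-$(Q+1)$ mixed indices yields $P^\star\leq\sum_j f_j(\hat{\vec y}_j)\leq \tilde P^\star+(Q+1)\max_{j\in\mathcal{J}}\rho_j$, and combining with $D^\star=\tilde P^\star$ gives the claimed inequality~(\ref{eq:duality_gap_bertsekas}).

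The main obstacle is the interlocking use of Shapley--Folkman in the augmented space $\R^{Q+1}$ (not in $\R^{Q}$, which would naively undercount by one and miss the role of the objective) together with Assumption 3, which is the nontrivial ingredient that lets us actually construct a feasible primal point at the mixed indices without violating the coupling constraint. Assumption 2's compactness is what underwrites both the attainment of all relevant infima and the closedness needed so that the Shapley--Folkman replacements stay in $\mathcal{Y}_j$; keeping the bookkeeping of feasibility and of the cost increment coordinated is the technically delicate part of the proof.
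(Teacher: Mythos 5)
The paper never actually proves this proposition---it imports it wholesale from Bertsekas with a pointer to \cite[\S~5.6.1]{Bertsekas-96}---and your sketch is precisely the Shapley--Folkman argument that the cited reference uses: identify $D^\star$ with the value of the convexified separable problem, apply Shapley--Folkman in the augmented space $\R^{Q+1}$ (correctly, not $\R^{Q}$) to obtain a decomposition with at most $Q+1$ genuinely mixed indices, repair feasibility at those indices via Assumption~3, and charge each repair at most $\sup_{\mathcal{Y}_j} f_j-\inf_{\mathcal{Y}_j} f_j$. The one step I would tighten is your one-line justification of $D^\star=\tilde P^\star$: because $\tilde f_j$ and each component of $\tilde{\vec h}_j$ in~(\ref{eq:h_tilde}) are defined by \emph{separate} infima over representations, $\tilde f_j+\boldsymbol\nu\tran\tilde{\vec h}_j$ is in general only an upper bound on the convex envelope of $f_j+\boldsymbol\nu\tran{\vec h}_j$, so the identity of the two dual functions for $\boldsymbol\nu\geq{\vec 0}$ requires the slightly longer perturbation-function (or joint-convexification) argument rather than the bare fact that a function and its envelope share the same infimum; this is a repairable technicality, not a flaw in the architecture.
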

\begin{proof}
We do not reproduce the proof here, but refer the interested reader to~\cite[\S~5.6.1, pp.~371-376]{Bertsekas-96} for a rigorous proof and to~\cite[\S~5.1.6]{Bertsekas-99} for an intuitive explanation. 
\end{proof}

Now we rely on \emph{Proposition~\ref{prop:Bertsekas_estimate_of_duality_gap}} above to prove \emph{Theorem~\ref{thm:duality_gap_tends_to_zero}}. The key steps of the proof are: (a) We equivalently reformulate MILP~(\ref{eq:mini_max_primal_problem_epi}) in the form~(\ref{eq:Estimate_Duality_Problem}) and (b) We show that the \textbf{Assumptions~1-3} of \emph{Proposition~\ref{prop:Bertsekas_estimate_of_duality_gap}} hold for this equivalent problem.

Let us start by considering the following problem which is closely related to MILP~(\ref{eq:mini_max_primal_problem_epi}):
\begin{equation} \label{eq:mini_max_primal_problem_epi-s_equi}
\begin{array}{ll}
\mbox{minimize} & \sum_{j\in\mathcal{M}}t_j \\
\mbox{subject to} & \sum_{j \in \mathcal{M}} \bar{\beta}_{ij}\ x_{ij} \leq \sum_{j\in\mathcal{M}}t_j, \ i\in\mathcal{N} \\
& \sum_{i \in \mathcal{N}}\gamma_{ij} x_{ij}=1, \   j\in\mathcal{M} \\
&  x_{ij}\in\{0,1\}, \ j\in\mathcal{M}, i\in\mathcal{N} \\
&  0\leq t_{j}\leq t^{\mathrm{max}}+\bar{\beta}_{n_jj}x_{n_jj}, \ j\in\mathcal{M} \ ,
\end{array}
\end{equation}
where the variables are ${\vec t}=(t_1,\ldots,t_M)$ and ${\vec x}=(x_{ij})_{i\in\mathcal{N}, \ j\in\mathcal{M}}$. The problem $\bar{\beta}_{ij}$ and $\gamma_{ij}$ are defined as

\begin{equation} \label{eq:beta_bar_and_gamma_defn}\hspace{-2mm}\nonumber
\begin{array}{ll}
\bar{\beta}_{ij}= \left\{\hspace{-2mm} \begin{array}{ll}
  \beta_{ij} & \textrm{$i{\in}\mathcal{N}, \ j{\in}\mathcal{M}_i$}\\
  0 &  \textrm{otherwise}
   \end{array} \right.
\end{array}\hspace{-1mm},\hspace{0mm}
\begin{array}{ll}
\gamma_{ij}= \left\{\hspace{-2mm} \begin{array}{ll}
  1 & \textrm{$j{\in}\mathcal{M},\ i{\in}\mathcal{N}_j$}\\
  0 &  \textrm{otherwise}
   \end{array} \right.
\end{array}  ,
\end{equation}
$n_j=\arg\min_{i\in\mathcal{N}_j}\bar{\beta}_{ij}$, and $t^{\mathrm{max}}<\infty$ is an upper bound on $t^\star_{j}$, the optimal solution component of problem~(\ref{eq:mini_max_primal_problem_epi-s_equi}) that corresponds to $t_j$. For example, we use $t^{\mathrm{max}}=\max_{i\in\mathcal{N},\bar{j}\in \mathcal{M}}\bar{\beta}_{i\bar{j}}$, throughout this paper. We can easily show that problem~(\ref{eq:mini_max_primal_problem_epi-s_equi}) is equivalent to original MILP~(\ref{eq:mini_max_primal_problem_epi}) and the optimal value $P^\star$ is equal to the optimal value $p^\star$ of MILP~(\ref{eq:mini_max_primal_problem_epi}), i.e.,
\be\label{eq:P_star_equal_p_star}
P^\star =p^\star \ .
\ee
We refer to problem~(\ref{eq:mini_max_primal_problem_epi-s_equi}) as the \emph{modified MILP}, which is in the form~(\ref{eq:Estimate_Duality_Problem}), where
\begin{enumerate}
\item[1.] $\mathcal{J} = \mathcal{M}$ and $J= M$,
\item[2.] ${\vec y}_j=({\vec z}_j,t_j)\in\R^{y_i}$, with ${\vec z}_j=(x_{ij})_{i\in\mathcal{N}}$ and $y_j=N+1$,
\item[3.] $f_j({\vec y}_i)=t_j$,
\item[4.] $\mathcal{Y}_j=\big\{ ((x_{ij})_{i\in\mathcal{N}},t_j) \ \left| \ \sum_{i \in \mathcal{N}} \gamma_{ij}x_{ij}=1, \right. \  x_{ij}\in\{0,1\}, \ i\in\mathcal{N}, \ t_{j}\in[0,t^{\mathrm{max}}+\bar{\beta}_{n_jj}x_{n_jj}] \big\}$,
\item[5.] ${\vec h}_j({\vec y}_j)=((\bar{\beta}_{1j}x_{1j}-t_j),\ldots,(\bar{\beta}_{Nj}x_{Nj}-t_j))\in\R^Q$, with $Q=N$, and
\item[6.] ${\vec b}= {\vec 0}$.
\end{enumerate}

Let us now show that the \textbf{Assumptions~1-3} of \emph{Proposition~\ref{prop:Bertsekas_estimate_of_duality_gap}} hold for the modified MILP~(\ref{eq:mini_max_primal_problem_epi-s_equi}). It is straightforward to see that \textbf{Assumptions~1-2} hold. Checking whether \textbf{Assumption~3} holds is less trivial as we show next.

Let $\tilde{{\vec y}}$ be any given vector in $\conv{(\mathcal{Y}_j)}$. From the definition of $\tilde{\vec h}_j(\tilde{\vec y})$ (see~(\ref{eq:h_tilde})), we can express it as
\be
\tilde{\vec h}_j(\tilde{\vec y})= \sum_{k=1}^{y_j+1}\alpha^k{\vec h}_j({\vec y}^k)
\ee
for some ${\vec y}^k\in\mathcal{Y}_j$ and $\alpha^k$ such that $\tilde{\vec y}=\sum_{k=1}^{y_j+1}\alpha^k{\vec y}^k$, $\sum_{k=1}^{y_j+1}\alpha^k=1, \ \alpha^k\geq 0$. Particularized to the modified MILP~(\ref{eq:mini_max_primal_problem_epi}), we can write the set of equations~(\ref{eq:d_star_equals_D_star1})-(\ref{eq:d_star_equals_D_star7}),
where $t^{\mathrm{max}}_c = t^{\mathrm{max}}+\bar{\beta}_{nj}$ and ${\vec y}= (0,0,\ldots,1,0,\ldots,0,t^{\mathrm{max}}_c)$, which is feasible, i.e.,  ${\vec y}\in\mathcal{Y}_j$ and \textbf{Assumption~3} holds. Note that the first three equalities~(\ref{eq:d_star_equals_D_star1})-(\ref{eq:d_star_equals_D_star3}) follows from straightforward manipulations,  inequality~(\ref{eq:d_star_equals_D_star4}) follows from that $t^k_j\leq t^{\mathrm{max}}+\bar{\beta}_{n_jj}{x^k_{n_jj}}$ for all $k$, inequality~(\ref{eq:d_star_equals_D_star5}) follows from that $\mathop{\textstyle{\sum}}_{k}(\alpha^k x^k_{1j})\geq 0$ for all $k$, and last two equalities~(\ref{eq:d_star_equals_D_star6})-(\ref{eq:d_star_equals_D_star7}) follow from straightforward manipulations.

Finally, let us show that $D^\star=d^\star$, where $D^\star$ is the dual optimal value of the associated dual problem~[compare with~(\ref{eq:Estimate_Duality_Problem_dual})] of the modified MILP~(\ref{eq:mini_max_primal_problem_epi-s_equi}). We denote by $P^\star_{\mathrm{relax}}$ the optimal value of the LP relaxation of~(\ref{eq:mini_max_primal_problem_epi-s_equi}). Thus, we have \addtocounter{equation}{1}
\begin{equation}\label{eq:d_star_equals_D_star}
\begin{split}
D^\star & = P^\star_{\mathrm{relax}}= p^\star_{\mathrm{relax}}=d^\star \ ,
\end{split}
\end{equation}
where the first equality follows from a similar approach as described in the proof of \emph{Proposition~\ref{prop:dualoptimal_Vs_relaxed_Primal}}, the second equality follows from that the LP relaxations of problem~(\ref{eq:mini_max_primal_problem_epi-s_equi}) and of problem~(\ref{eq:mini_max_primal_problem_epi}) (see problem~(\ref{eq:mini_max_primal_problem_epi_relax})) have the same optimal value, and the last equality follows from \emph{Proposition~\ref{prop:dualoptimal_Vs_relaxed_Primal}}.

By using~(\ref{eq:P_star_equal_p_star}), (\ref{eq:d_star_equals_D_star}), and that \textbf{Assumptions~1-3} hold for the MILP~(\ref{eq:mini_max_primal_problem_epi}) together with \emph{Proposition~\ref{prop:Bertsekas_estimate_of_duality_gap}}, we have
\begin{equation}\label{eq:bound_last}
\begin{split}
p^\star-d^\star & \leq (N+1)\max_{j\in\mathcal{M}}\rho_j\\
& \leq (N+1)\max_{j\in\mathcal{M}}\big(t^{\mathrm{max}}+\bar{\beta}_{n_jj}\big)\\
& = (N+1)\left(\varrho + \max_{j\in\mathcal{M}}\varrho_j\right) \ ,
\end{split}
\end{equation}
where the first inequality follows from~(\ref{eq:duality_gap_bertsekas}), the second inequality follows from~(\ref{eq:duality_gap_bertsekas_rho}), $\sup t_j = t^{\mathrm{max}}+\bar{\beta}_{n_jj}$, and $\inf t_j = 0$, and the last equality follows from that $t^{\mathrm{max}}=\varrho$ and $\bar{\beta}_{n_jj}=\varrho_j$, which yields~(\ref{eq:bound_last_inside_theorem}).

%
We note from (\ref{eq:bound_last}) that the duality gap \emph{does not} depend on $M$, i.e., the total number of clients. Moreover, we note that $p^\star\rightarrow\infty$ as $m\rightarrow\infty$ (see the objective function of original problem~(\ref{eq:mini_max_primal_problem})). Thus, we conclude that, the \emph{relative} duality gap $(p^\star-d^\star)/p^\star$ diminishes to zero as $M\rightarrow\infty$.

%

\bibliographystyle{IEEEbib}
\bibliography{jour_short,conf_short,references,references_temp}
\end{document}